\newtheorem{theorem}{Theorem}
\newtheorem{definition}{Definition}
\newtheorem{corollary}{Corollary}
\newcommand{\argmax}{\operatornamewithlimits{arg\ max}}
\begin{document}
\onehalfspacing
\title{Paging with dynamic memory capacity}
\author{Enoch Peserico}
\thanks{This work was supported in part by Univ. Padova under Strategic Project AACSE}

\maketitle
\begin{abstract}
We study a generalization of the classic paging problem that allows the amount of available memory to vary over time -- capturing a fundamental property of many modern computing realities, from cloud computing to multi-core and energy-optimized processors. It turns out that good performance in the ``classic'' case provides no performance guarantees when memory capacity fluctuates: roughly speaking, moving from static to dynamic capacity can mean the difference between optimality within a factor $2$ in space and time, and suboptimality by an arbitrarily large factor. More precisely, adopting the competitive analysis framework, we show that some online paging algorithms, despite having an optimal $(h,k)-$competitive ratio when capacity remains constant, are not $(3,k)-$competitive for any arbitrarily large $k$ in the presence of minimal capacity fluctuations.

In this light it is surprising that several classic paging algorithms perform remarkably well even if memory capacity changes adversarially - even without taking those changes into explicit account! In particular, we prove that LFD still achieves the minimum number of faults, and that several classic online algorithms such as LRU have a ``dynamic'' $(h,k)-$competitive ratio that is the best one can achieve without knowledge of future page requests, even if one had perfect knowledge of future capacity fluctuations (an exact characterization of this ratio shows it is almost, albeit not quite, equal to the ``classic'' ratio $\frac{k}{k-h+1}$). In other words, with careful management, knowing/predicting future memory resources appears far less crucial to performance than knowing/predicting future data accesses.
\end{abstract}

\newpage
\vspace*{3mm}
\section{Introduction}
\label{sec:intro}
\vspace*{1mm}

This article examines a generalization of the classic paging problem that allows the amount of available memory to vary over time. After briefly reviewing the paging problem (subsection \ref{sub:paging}) this section motivates paging with dynamic capacity (subsection \ref{sub:elastic}) and provides an overview of our results and of the organization of the rest of the article (subsection \ref{sub:results}).

\subsection{The paging problem}
\label{sub:paging}
The memory/data storage system of modern computing devices is almost always organized as a hierarchy of several layers of progressively larger capacity but also higher access cost (in terms of both time and energy). The ever widening gap in both capacity and cost between different layers makes the \emph{paging problem}, i.e. the problem of efficiently orchestrating the flow of information across the memory hierarchy, crucial to the performance of a computing device. The most widely used theoretical model for studying paging is that of a two-layer system: a smaller \emph{memory} layer with a capacity of $k$ \emph{pages} (data blocks), and a larger layer of infinite capacity whose pages can only be accessed by first copying them into memory -- an operation usually termed a {\em (page) fault}. Given any sequence of pages that must be accessed in order, an algorithm for the paging problem must choose which page(s) to ``evict'' from memory, whenever a new page must be copied into it, so as to minimize the total number of faults.

The simple algorithm LFD (Longest Forward Distance) that evicts the page accessed furthest in the future has long been known to be optimal \cite{belady}. However, paging is often studied as an {\em online} problem, i.e. an algorithm can decide evictions only on the basis of past requests. One very popular framework for evaluating the performance of online paging algorithms is that of \emph{competitive analysis}~\cite{competitive}. A paging algorithm is said to have an $(h,k)-$\emph{competitive ratio} of (no more than) $\rho$ if, for every request sequence, it incurs in expectation with a memory of capacity $k$ at most $\rho$ times as many faults as an optimal offline algorithm incurs with a memory of capacity $h\leq k$, plus a number of faults independent of the request sequence. The ratio $\frac{k}{h}$ is called the \emph{resource augmentation}. Resource augmentation and competitive ratio capture, respectively, the space and access cost overheads incurred by an online algorithm.

Many simple, deterministic algorithms including LRU\footnote{Least Recently Used -- evict the least recently accessed page}, FIFO\footnote{First In First Out -- evict the page brought least recently into memory.}, FWF\footnote{Flush When Full -- evict all pages whenever memory is full and space is needed.} and CLOCK\footnote{Mark any page accessed; to evict a page, cycle through pages, unmarking those found marked and evicting the first found unmarked.} have an $(h,k)$-competitive ratio of $\frac{k}{k-h+1}$ \cite{lrufifo,online}; and the same ratio holds for RAND\footnote{Evict a page chosen uniformly at random.} \cite{rand}. This ratio is optimal for deterministic algorithms, and even for randomized ones if page requests can depend on previous choices of the paging algorithm (the ``adaptive adversary'' model~\cite{rand} which we adopt throughout this article\footnote{More precisely, in the \emph{online} adaptive adversary model, the choices of the reference offline algorithm can depend only on the \emph{past} random choices of the online algorithm; in the \emph{offline} adaptive adversary model they can depend on the random choices of the online algorithm over the entire request sequence. The bounds above -- and in fact all the bounds in this article -- hold for both models, with one exception: the upper bound on the competitive ratio of RAND above, and the corresponding upper bound we provide for RAND in theorem~\ref{thm:elasticrand}, only hold in the adaptive online model.}).
 
Since $\frac{k}{k-k/2+1}<2$, \emph{many simple online algorithms never fare worse than the optimal offline algorithm would on a memory system with half the capacity and twice the access cost}. This justifies the use of competitive analysis for preliminary performance evaluation of paging algorithms. Its ``worst-case'' approach may be somewhat pessimistic, but it is not overly so for many popular online paging algorithms -- for which it provides guarantees of performance within a factor $2$ of the optimal \emph{under any workload} (in terms of faults and required memory capacity). In contrast, the finer granularity evaluation provided by experimental benchmarking is inevitably tied to specific workloads.

\cite{online}, \cite{onlinesurvey05} and~\cite{competitivealternatives} provide three excellent surveys of the many variants of competitive analysis for the paging problem: these include somehow limiting the choice of the adversarial request sequence
~\cite{marking,diffuseadversary,accessgraph,multifinger},
amortizing the performance evaluation over a spectrum of sequences~\cite{maxmax,averageorder,bijective} 
or of memory capacities~\cite{youngcachechanges}, considering pages of different size and access cost~\cite{iranicachechanges,youngcachechanges}, and accounting for the non-zero cost of non-fault requests~\cite{nonfault}. 

\subsection{Paging with dynamic capacity}
\label{sub:elastic}

Throughout the long history and the many variants of the paging problem, memory capacity has generally been assumed to remain fixed throughout the request sequence. This no longer reflects many important computing realities.

In a cloud computing environment, the amount of physical memory available to an individual virtual machine does vary considerably over time depending on the virtual machine's load and on the number, load and relative class of service of other virtual machines hosted on the same hardware. Even on a simple PC, most modern operating systems have and use the option of declaring some critical virtual pages temporarily ``unswappable'', pinning them in main memory and thus reducing the amount of main memory available to user processes. 

Memory fluctuations also take place when considering the cache-RAM interface -- in which case memory represents cache memory and pages represent cache lines. In many multi-core processor designs cache capacity is partitioned \emph{dynamically} between different cores \cite{utilitypartition}. And low-power chip designs can often dynamically disable underutilized portions of the cache to save energy \cite{sleepycache}, again resulting in a capacity that can vary over time.

This article studies an extension of the classic paging problem that addresses these issues allowing memory capacity to fluctuate between $1$ and $k$ pages (instead of being constantly equal to $k$). These fluctuations may be either known beforehand to the paging algorithm (an ``offline'' problem), or unknown until they take place (an ``online'' problem). Note that, although there exists a large body of work on servicing the same request sequence with a policy that is simultaneously ``good'' on memories of different~\cite{youngcachechanges} and perhaps unknown~\cite{cacheoblivious} \emph{but unchanging} capacity, allowing capacity to vary dynamically during the course of the computation is an entirely different problem; as we shall see, a solution to the former does not guarantee even an approximate solution to the latter.

\cite{ramrental0,ramrental1} recently introduced the related, but fundamentally different, problem of \emph{RAM rental}. In RAM rental memory capacity can fluctuate \emph{under control of the paging algorithm}, and the goal is to minimize a linear combination of average capacity and fault rate over time. In practice there are usually very strong constraints on the set of admissible capacity values, on how they can change over time, and on their relative costs (which may themselves fluctuate). Also, a number of architectural approaches (e.g.~\cite{exokernel}) decouple the portion of the system responsible for page replacement from that responsible for capacity allocation. Then, assuming as we do that capacity fluctuations are not controlled by the paging algorithm (in fact, that they may be unknown beforehand and even chosen adversarially) leads to a more robust evaluation of page replacement policies. As we shall see, it turns out that page replacement \emph{can} be decoupled well from capacity control, yielding robust replacement policies that do not depend on the capacity choices or costs, while at the same time simplifying the RAM rental problem (which is still widely open -- e.g. little is known in terms of lower bounds or resource augmentation).

\subsection{Our results}
\label{sub:results}

The rest of this article is organized as follows. 
Section~\ref{sec:formal} introduces some formalism and terminology. In particular, it extends the notion of ``online vs. offline'' problem to encompass the extra dimension of future memory capacity, and it extends the notion of $(h,k)-$resource augmentation to the dynamic capacity scenario (in a nutshell, restricting the offline algorithm to at most a fraction $\frac{h}{k}$ of the online algorithm's \emph{current} memory capacity).

Section~\ref{sec:bad} shows the existence of online paging algorithms that have an (optimal) $(h,k)$-competitive ratio of $\frac{k}{k-h+1}$ in the ``classic'' paging model, and yet are no longer $(3,k)-$competitive \emph{for any arbitrarily large $k$} if their memory capacity is subject to single page fluctuations. This very negative result provides strong justification for our inquiry, as one cannot infer performance in the presence of (even minimal) memory fluctuations from performance in their absence.

In this light, it is quite surprising that many well-known algorithms perform remarkably well in the presence of memory fluctuations \emph{even if those fluctuations are chosen adversarially}. In section~\ref{sec:good} we show that the classic LFD algorithm remains optimal for all possible memory capacity fluctuations even though it does not explicitly take those fluctuations into account (i.e. it is an online algorithm in terms of memory fluctuations). We also show that in the dynamic capacity framework every online algorithm that is either marking \cite{marking} like LRU, FWF or MARK\footnote{Mark any page accessed; evict a random unmarked page, first unmarking all pages if all are marked.}, or \emph{dynamically conservative} (a simple refinement of the notion of ``conservative algorithm''\cite{youngconservative}), like LRU, CLOCK or FIFO, has an $(h,k)-$competitive ratio no larger than 
$\rho_{EL}(h,k)= \max_{k'\leq k, k'\in \mathcal{N}} 
\frac{k'}{k'-\lfloor h \frac{k'}{k} - \frac{h}{k}\rfloor}$.
Exactly the same bound holds for RAND (against an online adaptive adversary). 

Section~\ref{sec:tight} analyses $\rho_{EL}(h,k)$. We show that it is a lower bound to the $(h,k)-$competitive ratio achievable by any online paging algorithm in the presence of memory fluctuations, proving the optimality of marking and dynamically conservative algorithms. We also show that $\rho_{EL}(h,k)$ almost, but not quite, matches the ``classic'' bound of $\frac{k}{k-h+1}$ on the $(h,k)-$competitive ratio. More precisely, $\rho_{EL}(h,k)$ is at least $1+(\frac{1}{k}-\frac{2}{k^2})$ times as large as $\frac{k}{k-h+1}$ for any odd $h$ and $k=2h$, but it is always less than $1+\frac{1}{k}$ times as large -- and if $h>k-\sqrt{k}$ the two quantities actually coincide.

Section~\ref{sec:applications} briefly looks at the implications of our results for the RAM rental problem. In a nutshell, since many simple replacement are near optimal regardless of capacity fluctuations, RAM rental is simplified into the problem of just choosing a ``good'' capacity sequence without worrying about replacement. 

Finally, section~\ref{sec:end} summarizes our results and looks at their significance and at possible directions of future work.

\vspace*{3mm}
\section{Some formalism/terminology}
\label{sec:formal}
\vspace*{1mm}

We can easily extend the notion of request sequence $\sigma=r_1,r_2,\dots$ to the case of memory fluctuations. We simply assume that, interleaved with standard page requests, it is possible to have two additional types of requests, \emph{growths} and \emph{shrinks}. On a growth, memory capacity increases by $1$ page; on a shrink, it decreases by $1$ -- and if the memory was full a page must be evicted. We assume that initially memory capacity is $0$. Throughout the rest of this article, we denote a growth request by the symbol $+$ and a shrink request by the symbol $-$, and we denote $k$ consecutive growths / shrinks by $+^k$ and $-^k$. Thus, a standard request sequence $p_1,\dots,p_n$ on a memory of capacity $k$ simply becomes $+^k, p_1,\dots, p_n$ in the more general dynamic capacity framework. 

The request sequence automatically induces a \emph{page sequence} $\pi=<p_1, p_2, \dots>$ (the sequence of requested pages $p_1,p_2,\dots$, as in the classic paging problem) and a \emph{capacity sequence} $\mu=m_1,m_2,\dots$ where $m_i$ is the memory capacity immediately before the request for $p_i$ (i.e. it is equal to the number of $+$s minus the number of $-$s in the request prefix ending with $p_i$).
Note that the presence of growths and shrinks introduces a second aspect of ``onlineness''. More formally:

\begin{definition}
\label{def:elastic_online}
A paging algorithm ALG is online relative to the page sequence if its eviction choices before servicing a request are independent of any future page requests; otherwise it is offline relative to the page sequence. Similarly, ALG is online relative to the capacity sequence if its eviction choices before servicing a request are independent of any subsequent growths and shrinks; otherwise it is offline relative to the capacity sequence. ALG is a fully online, partially offline and fully offline paging algorithm if it is online relative to (respectively) both, one, or neither of the page and the capacity sequence.
\end{definition}

Thus, in the dynamic capacity model, all well-known paging algorithms such as LRU, FIFO, FWF, CLOCK, RAND and MARK are fully online, and LFD is partially offline, being offline relative to the page sequence even though it is online relative to the capacity sequence.

We can easily extend the notion of $(h,k)-$competitive ratio to the dynamic capacity model by comparing the cost (i.e. number of faults) incurred by an online algorithm whose memory capacity never exceeds $k$ to the cost incurred by an offline algorithm whose memory capacity never exceeds $\frac{h}{k}$ times that of the online algorithm. More formally, denote by OPT the optimal offline algorithm, and by $c_{ALG}(\pi, \mu)$ the cost incurred by an algorithm ALG when servicing a page sequence $\pi=p_1, \dots, p_n$ with a capacity sequence $\mu = m_1, \dots, m_n$. Also, given a capacity sequence $\mu = m_1, \dots, m_n$ and a non-negative number $a$, denote by $\lfloor a\cdot\mu\rfloor$ the capacity sequence $m'_1,\dots, m'_n$ with $m'_i = \lfloor a\cdot m_i\rfloor$. Then:

\begin{definition}
\label{def:elastic}
A paging algorithm ALG has a \emph{dynamic} $(h,k)-$competitive ratio of (at most) $\rho$ if there exists some constant $d$ such that, for any page sequence $\pi = p_1, \dots, p_n$ and any capacity sequence $\mu = m_1, \dots, m_n$ such that, $\forall i$, $m_i\leq k$:
\begin{equation*}
c_{ALG}(\pi,\mu) \leq \rho\cdot c_{OPT}(\pi,\lfloor\frac{h}{k}\cdot\mu\rfloor) + d
\end{equation*}
\end{definition}

Note that the dynamic $(h,k)-$competitive ratio of an algorithm is always an upper bound to its $(h,k)-$competitive ratio. Thus online paging with dynamic capacity is in some sense ``harder'' than classic online paging, and no online algorithm can have a dynamic $(h,k)-$competitive ratio lower than the ``classic'' ratio $\frac{k}{k-h+1}$.

\vspace*{3mm}
\section{Minimal capacity fluctuations can lead to arbitrarily large performance degradation}
\label{sec:bad}
\vspace*{1mm}

This section shows that there exist online paging algorithms that do not depend explicitly on memory capacity, and that have an optimal $(h,k)$-competitive ratio in the classic setting of fixed memory capacity, but are not competitive at all, even with arbitrary resource augmentation, when faced with even slight fluctuations in memory capacity. Consider the online paging algorithm LFRU (Least Frequently / Recently Used) that starts as LRU and then alternates between LFU and LRU -- switching from LRU to LFU after any palindrome subsequence incurring more faults in its second half, and switching from LFU to LRU after any palindrome subsequence incurring more faults in its first half:

\begin{algorithm}
\caption{LFRU: {s}ervice $p_0,\dots,p_n$ as follows}
\label{alg:lfru}
\small
\begin{algorithmic}
\STATE at $p_0$ POLICY $\leftarrow$ LRU
\STATE
\FOR{$i=1\dots n$}
\STATE
\STATE {\bf if}  at $p_i$ POLICY = LRU AND $\exists j<i$:\\
$<p_j\dots p_i>$ is palindrome
AND faults($p_j\dots p_{\lfloor\frac{i+j}{2}\rfloor}$) $<$ faults($p_{\lceil\frac{i+j}{2}\rceil}\dots p_i$)
\STATE {\bf then} at $p_{i+1}$ POLICY $\leftarrow$ LFU
\STATE
\STATE {\bf else if}  at $p_i$ POLICY = LFU AND $\exists j<i$:\\
$<p_j\dots p_i>$ is palindrome
AND faults($p_j\dots p_{\lfloor\frac{i+j}{2}\rfloor}$) $>$ faults($p_{\lceil\frac{i+j}{2}\rceil}\dots p_i$)
\STATE {\bf then} at $p_{i+1}$ POLICY $\leftarrow$ LRU
\STATE
\ENDFOR
\end{algorithmic}
\end{algorithm}

We would convince the reader that LFRU, while somewhat artificial and difficult to implement in practice, is not too different from many real-world paging heuristics designed for static memory capacity (note that the behaviour of LFRU, like that of LRU and LFU, does not depend explicitly on memory capacity). In fact, pure LRU tends to be outperformed in practice by various LRU/LFU hybrids~\cite{lrfu,arc}.
The reason for this is the regrettably common coexistence of ``local'' or ``temporal'' computations, exhibiting a high degree of temporal locality and data reuse, with ``streaming'' computations that access long sequences of sequential data with no temporal locality at all. In such cases, under LRU and similar policies such as CLOCK, streaming data not only gain no benefit from the use of a cache (since every new access is a fault) but \emph{pollute} the cache, forcing the eviction of temporal data and preventing the temporal computation from deriving more than a minimal benefit from the cache. One possible solution is to combine LRU with eviction schemes biased, like LFU, against data that have no reuse history even if their last (and only) access was very recent. And since LRU performs best when future requests are a ``mirror image'' of the past, it may seem reasonable to switch to it when such palindrome sequences exhibit good caching behaviour, and switch to LFU when such palindrome sequences exhibit poor caching behaviour -- which is exactly what LFRU does.

It turns out that LFRU has an optimal $(h,k)-$competitive ratio in the classic paging model where memory capacity is fixed. At the same time, even if faced with capacity fluctuations of just a single page, and even if allowed the use of an arbitrarily large amount of memory, LFRU's fault rate can be arbitrarily larger than that of an offline algorithm running with just $3$ pages of memory.
More formally we prove:

\begin{theorem}
\label{thm:LFRU}
LFRU has an $(h,k)-$competitive ratio equal to $\frac{k}{k-h+1}$ if memory capacity is constant, but has no finite dynamic $(h,k)-$competitive ratio for any $h\geq 3$ and any arbitrarily large $k$.
\end{theorem}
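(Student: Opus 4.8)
The plan is to treat the two halves of the statement separately; in both the pivot is a single structural fact about how LRU faults on a palindrome.

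\emph{The static case.} I would first argue that, when memory capacity is constant, LFRU never leaves LRU mode, so LFRU and LRU coincide on every capacity-free request sequence; since LRU is known to have $(h,k)$-competitive ratio exactly $\frac{k}{k-h+1}$, and since $\frac{k}{k-h+1}$ is a lower bound for every deterministic online algorithm, this immediately yields the first claim. The heart of this step is the following assertion: \emph{with constant capacity, on every palindrome subsequence $p_j\dots p_i$, LRU incurs at least as many faults on the first half $p_j\dots p_{\lfloor (i+j)/2\rfloor}$ as on the second half $p_{\lceil (i+j)/2\rceil}\dots p_i$.} Granting this, the test that moves LFRU from LRU to LFU (which requires \emph{strictly more} faults in the second half) can never succeed, so starting in LRU mode LFRU stays there forever.

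To prove the assertion I would use the standard characterization that LRU faults on a request exactly when it is a first occurrence, or when more than $k-1$ distinct other pages have been requested since the previous occurrence of the same page. Reading the palindrome forward, the gap structure between consecutive occurrences in the second half mirrors that of the first half, so the only systematic difference is at the boundaries: a page whose first appearance inside the window lies in the first half is necessarily a counted fault there, whereas its mirror appearance in the second half is preceded by that first-half occurrence and so need not fault. Formalizing this as a fault-reducing injection from second-half faults into first-half faults is the main obstacle of the static case; the delicate points are the central element (double-counted when the window has odd length) and pages that recur several times inside the window.

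\emph{The dynamic case.} Here the strategy is to exhibit, for every target ratio, a page sequence together with a capacity sequence that never deviates from $k$ by more than a single page, on which LFRU faults $\Omega(C)$ times while OPT with roughly $h\ge 3$ pages faults only $O(k)$ times; letting $C\to\infty$ with $k$ fixed then rules out any finite dynamic $(3,k)$-competitive ratio, and the same construction works for all $h\ge 3$ and arbitrarily large $k$. The construction proceeds in four stages. (i) A short trigger palindrome during which the capacity oscillates between $k$ and $k-1$ inside its second half: by the static assertion no palindrome can be second-half-heavy at constant capacity, but repeatedly forcing and relieving a single-page shortage in the second half inflates its fault count until it strictly exceeds that of the first half, flipping LFRU into LFU mode---this is exactly where capacity fluctuation is essential. (ii) A load stage requesting $k-1$ fresh pages once each in an order containing no nontrivial palindrome. (iii) A frequency-building stage that replays each such page in a long run, which is all hits for LFRU and hence a zero-fault (non-triggering) palindrome, while costing OPT only one fault per run. (iv) A thrash stage alternating two further fresh pages $d,e$ of minimal frequency: LFRU, pinned by the high-frequency pages, must evict and reload on every access, whereas OPT simply retains $d$ and $e$.

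The main obstacle of the dynamic case is palindrome control: I must guarantee that LFRU switches to LFU exactly once and never switches back, even though stages (iii) and (iv) are highly repetitive and hence riddled with palindromes. The key observations are that in LFU mode every thrash access is a fault, so every palindrome lying inside the thrash has equally many faults in each half and cannot meet the strict inequality that would return LFRU to LRU; that the frequency-building runs are resident and hence fault-free; and that the transitions between distinct pages prevent any palindrome from straddling a stage boundary in a first-half-heavy way. It remains to check that OPT keeps both $d$ and $e$ throughout: since the capacity may dip to $k-1$, OPT's allowance dips to $\lfloor h(k-1)/k\rfloor$, which is at least $2$ precisely when $h\ge 3$, explaining the hypothesis. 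Tallying the faults then gives an unbounded ratio.
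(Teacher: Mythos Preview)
Your static-case plan is exactly the paper's: show that under constant capacity LRU never incurs strictly more faults on the second half of a palindrome than on the first, so the LRU$\to$LFU trigger never fires and LFRU inherits LRU's ratio $\frac{k}{k-h+1}$. The paper executes this by splitting each half into cold (first-occurrence) and hot requests, observing that the hot-fault pattern is mirror-symmetric between the halves while cold faults can only be fewer in the second half; your ``fault-reducing injection'' is the same idea phrased differently.

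For the dynamic case you have the right overall strategy---force a switch to LFU via a capacity-induced second-half fault, then thrash on two minimum-frequency pages---but your staging diverges from the paper's in a way that leaves a gap. You put the trigger \emph{first} and only afterwards load the $k-1$ pinning pages and build their frequencies. But any trigger palindrome that can be made second-half-heavy by single-page capacity dips must already involve roughly $k$ distinct pages, each appearing at least twice; once LFRU is in LFU mode, those trigger pages have frequency $\geq 2$, while the fresh pages of your stage~(ii) arrive with frequency~$1$ and are therefore exactly what LFU evicts on the next fault. So stage~(ii) does not retain the $k-1$ fresh pages, and stage~(iii) is not ``all hits for LFRU'' as you claim. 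The paper sidesteps this by reversing the order: it does the frequency-building \emph{first, while still in LRU mode} (the prefix $p_1,p_2,p_3^\ell,\dots,p_{3m}^\ell$), and only then runs a single palindrome $p_1,\dots,p_{3m},p_{3m},\dots,p_1$ with exactly one capacity drop (no repeated oscillation) at the start of its second half, producing exactly one extra second-half fault. At the moment of the flip the high-frequency pages $p_3,\dots,p_{3m}$ are already in place, and the thrash $(p_2,p_1)^{\ell-1}$ can begin immediately. Your construction is salvageable---merging (ii) and (iii) into blocks $q_i^L$ so each fresh page acquires high frequency before the next one arrives would work---but as written the separation of (ii) from (iii) breaks the argument.
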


\begin{proof}
Let us first prove that LFRU has an $(h,k)-$competitive ratio equal to $\frac{k}{k-h+1}$ if memory maintains an arbitrary but fixed capacity $k$. We need only prove that, as long as LFRU keeps behaving as LRU, on no page request sequence a palindrome subsequence incurs more faults in its second half: then LFRU keeps behaving exactly as LRU and shares its $(h,k)$-competitive ratio of $\frac{k}{k-h+1}$. 

Consider a palindrome page subsequence  $\pi=p_{i_1},\dots,p_{i_\ell},p_{i_\ell},\dots,p_{i_1}$ of even length $2\ell$, containing $\lambda\leq\ell$ distinct pages $p_1,\dots,p_\lambda$. Note that, if $\lambda\leq k$, by the end of the first half of $\pi$ the $\lambda$ most recently requested pages are $p_i,\dots,p_\lambda$, which are then in memory and prevent any fault from taking place during the second half of $\pi$. Then, we need only consider the case $\lambda>k$.

Let us focus on the first half of $\pi$. For each distinct page, we analyse separately the first request to it (which we call a \emph{cold} request), and the remaining requests, if any (which we call \emph{hot} requests). The $i^{th}$ cold request is certainly a fault for any $i>k$, since at least $k$ distinct pages have been requested before it in $p_{i_1},\dots,p_{i_\ell}$; so the number of cold requests incurring faults is at least $\ell-k$. Let us now look at hot requests, and let $r_i$ be the number of hot requests of $p_i$ in the first half of $\pi$. For $1\leq i\leq \lambda$ and $1\leq j \leq r_i$, let $D_i^j$ be the set of distinct pages requested between the $j^{th}$ hot request for $p_i$ and the previous request for $p_i$, inclusive (so $D_i^j$ always includes $p_i$). Then the $j^{th}$ hot request to $p_i$ is a fault if and only if $|D_i^j|>k$, and the total number of faults in $p_{i_1},\dots,p_{i_\ell}$ is:
\begin{equation}
\label{eqn:firsthalf}
f_\pi^{\frac{1}{2}} \geq (\ell-k) + |\{(i,j):|D_i^j|>k\}|
\end{equation}

Let us now focus on the second half of $\pi$. Again, we divide requests for any distinct page into a cold request (the first) and hot requests (subsequent ones, if any). The first $k$ cold requests of $p_{i_\ell},\dots,p_{i_1}$ are for the last $k$ distinct pages requested in $p_{i_1},\dots,p_{i_\ell}$, which are then present in memory at the beginning of $p_{i_\ell},\dots,p_{i_1}$. So in $p_{i_\ell},\dots,p_{i_1}$ none of the first $k$ cold requests incurs a fault, yielding and at most $\lambda-k$ faults on cold requests. Let us now look at the hot requests of $p_{i_\ell},\dots,p_{i_1}$; those for $p_i$ are obviously $r_i$, as in the first half of $\pi$. For $1\leq i\leq \lambda$ and $1\leq j \leq r_i$, let $\bar{D}^i_j$ be set of distinct pages between the $j^{th}$ hot request for $p_i$ and its previous request, including $p_i$ itself; then the $j^{th}$ hot request for $p_i$ is a fault if and only if $|\bar{D}_i^j|>k$, and the total number of faults in $p_{i_\ell},\dots,p_{i_1}$ is:

\begin{equation}
\label{eqn:secondhalf}
\bar{f}_\pi^{\frac{1}{2}} \leq (\ell-k) + |\{(i,j):|\bar{D}_i^j|>k\}|
\end{equation}

It is crucial to observe that, since $\pi$ is palindrome, $\bar{D}^i_j=D^i_{r_i-j+1}$. Then $|\{(i,j):|\bar{D}_i^j|>k\}|=|\{(i,j):|D_i^j|>k\}|$ and $f_\pi^{\frac{1}{2}}\geq\bar{f}_\pi^{\frac{1}{2}}$. The analysis is virtually identical for palindrome subsequences of odd length; and thus with static memory capacity LFRU incurs no more faults on the second half of any palindrome subsequence than in the first half and has an $(h,k)-$competitive ratio equal to $\frac{k}{k-h+1}$.

To prove that LFRU can incur arbitrarily more faults than an optimal offline algorithm OPT when memory capacity fluctuates -- even if OPT is limited to a capacity fluctuating between capacity $3$ and $2$, while LFRU's fluctuates between $3m$ and $3m-1$ for an arbitrarily large $m$ -- we show how LFRU can be coaxed into, and kept in, LFU behaviour, and how that behaviour can result in arbitrarily more faults than OPT even with arbitrarily larger capacity.

Denote by $p^\ell$ the subsequence consisting of $\ell$ consecutive requests for $p$, and consider the page sequence: 
\begin{equation*}
\pi_1=<p_1,p_2,p_3^\ell\dots,p_{3m}^\ell,p_1,p_2,p_3,\dots,p_{3m},p_{3m},\dots,p_2,p_1>
\end{equation*}
with LFRU's memory capacity fixed at $3m$ except for the last $3m$ requests, during which it drops by $1$ to $3m-1$. It is immediate to see that when capacity drops $p_1$ is evicted, and that the last $6m$ requests form a palindrome subsequence experiencing a fault (only) on the last request. Thus, on the last request of $\pi$, LFRU switches to LFU behaviour, and evicts $p_2$ (which, like $p_1$, has experienced $\ell-1$ fewer requests than every other page $p_i$, $i\geq 3$). If the subsequence of requests $\pi'=(p_2,p_1)^{\ell-1}$ follows, with memory capacity remaining fixed at $3m-1$, LFRU keeps evicting in turn $p_1$ and $p_2$, which remain the two pages having experienced the fewest requests; and LFRU incurs at least $2\ell-2$ faults.

An optimal algorithm (or even just LRU) with memory capacity $3$ throughout all but the last $3m$ requests of $\pi$, and with memory capacity $2$ thereafter, would instead incur no more than $3m+3m+3m=9m$ faults during $\pi$, and no faults at all during $\pi'$ (retaining only $p_1$ and $p_2$ in memory). Thus, since $\ell$ can be chosen arbitrarily larger than $m$, LFRU does not have a finite $(3,3m)-$competitive ratio for any arbitrarily large $m$.
\end{proof}

\vspace*{3mm}
\section{Even adversarial fluctuations can be addressed efficiently (and ``implicitly'')}
\label{sec:good}
\vspace*{1mm}

In the light of theorem \ref{thm:LFRU} it may be somewhat surprising many well-known ``good'' paging algorithms still perform remarkably well in the dynamic capacity setting -- even though they do not take memory fluctuations into explicit account. It is very easy to prove:

\begin{theorem}
\label{thm:elasticlfd}
LFD incurs the minimal number of faults on any request sequence.
\end{theorem}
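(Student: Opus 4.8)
The plan is to adapt Belady's classic exchange argument \cite{belady} to the dynamic-capacity setting. As a preliminary normalization I would first argue that it suffices to compare LFD against an arbitrary \emph{demand-paging} (lazy) algorithm, i.e.\ one that loads a page only when it is requested and absent from memory, and evicts a page only when forced to -- either by a page fault with full memory, or by a shrink with full memory. The standard observation that loading or evicting a page earlier than strictly necessary never decreases the fault count carries over once one checks the new event types: a growth merely creates a slot that a lazy algorithm can leave unused until the next fault, while a shrink with full memory simply forces, at that moment, an eviction a non-lazy algorithm might have performed earlier. Since LFD is itself lazy, and since both algorithms are driven by the \emph{same} externally imposed capacity sequence $\mu$ (the growths and shrinks are part of the request sequence, hence identical for both), the two algorithms always face a forced eviction at exactly the same steps.

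Next I would induct on the first request at which an optimal lazy algorithm OPT disagrees with LFD. Up to that request the two caches are identical; call this common cache $C$. At the disagreeing step an eviction is forced, LFD evicts the page $a\in C$ of longest forward distance, and OPT evicts some $b\neq a$. I would then build OPT$'$ that copies LFD's choice here and mimics OPT afterwards, maintaining the invariant that the two caches differ in at most one page, of one of two types: a \emph{swap} discrepancy (OPT holds $a$ and lacks some page $x$, OPT$'$ holds $x$ and lacks $a$, with equal sizes), or a \emph{free-slot} discrepancy (OPT$'$ equals OPT with $a$ deleted, leaving one unused slot). The decisive structural fact is that, because $a$ has the longest forward distance in $C$, the first subsequent request to $a$ comes no sooner than the first subsequent request to any other page of $C$ -- and growths and shrinks, requesting no page, cannot disturb this ordering.

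The core is then a case analysis over the event following the disagreement -- an ordinary request, a growth, or a shrink -- showing that each transition preserves the invariant and never lets OPT$'$ fall behind OPT in faults. Ordinary requests and shrinks are handled essentially as in the static proof: a request or forced eviction touching only the common part of the cache is matched identically; an eviction of $a$ by OPT lets OPT$'$ evict $x$ (or drop its free slot) and \emph{merge} the two caches; and a request for $x$ (respectively for $a$, in the free-slot case) is the point at which OPT$'$ either banks a saved fault or merges, the forward-distance property guaranteeing this happens before $a$ is ever requested.

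The step I expect to be the main obstacle is precisely the interaction between growths and the discrepancy, which has no analogue in the static proof, where the cache stays full forever after its initial filling. A growth turns a full cache into a non-full one, so a subsequent fault on $x$ is answered by OPT with a free eviction-less load while OPT$'$ merely hits: this is exactly what forces the generalization from the single static ``swap'' invariant to the two-type invariant above, converting a swap discrepancy into a free-slot discrepancy (with OPT$'$ now one fault ahead). I would therefore devote the most care to verifying that every request/growth/shrink transition out of \emph{both} discrepancy types is covered and that in each the fault count of OPT$'$ does not exceed that of OPT. Once this is established, iterating the transformation turns OPT into LFD without ever increasing the number of faults, proving LFD optimal on every request sequence -- and, since LFD consults only future page requests and never future capacity changes, optimal even though it is online relative to the capacity sequence.
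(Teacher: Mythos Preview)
Your proposal is correct and follows essentially the same exchange argument as the paper's proof: the paper, too, replaces the earliest non-LFD (``close'') eviction by a far-page eviction, lets the modified algorithm mimic the original thereafter, and tracks the resulting one-page discrepancy via the invariant that $|\overline{M}\setminus M|$ never grows and hits $0$ no later than the first extra fault of $\overline{ALG}$. Your explicit split into swap versus free-slot discrepancies, and your singling out of growths as the new ingredient, make explicit what the paper compresses into that single potential, but the underlying strategy is identical.
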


\begin{proof}
We can safely ignore algorithms leaving unoccupied space in memory after an eviction, as such an eviction could be delayed without incurring additional faults. Let a page be \emph{close} if it will be accessed before another page currently in memory, \emph{far} otherwise. LFD is the algorithm evicting no close pages. We prove the theorem showing that one can always eliminate the earliest close eviction without altering previous evictions or increasing the number of faults.

Let $p$ be the close page evicted earliest, at time $t$, by an algorithm $ALG$ servicing a request sequence. Consider the algorithm $\overline{ALG}$ that operates as $ALG$ until $t$, when it instead evicts a far page $\overline p$, and then operates as follows. Denote by $M$ and $\overline{M}$ the sets of pages respectively in $ALG$'s and $\overline{ALG}$'s memory. When both $ALG$ and $\overline{ALG}$ must incur an eviction, $\overline{ALG}$ evicts the same page as $ALG$ if possible; otherwise, when $\overline{ALG}$ must incur an eviction, it evicts a page not in $M$ (as soon as $\overline{M}=M$, $ALG$ and $\overline{ALG}$ coincide). After $t$, let $t'$ be the time of the first request or eviction of either $p$ or $\overline{p}$. Until $t'$ ALG and $\overline{ALG}$ incur exactly the same faults and evictions, and thus $M\setminus \overline{M}=\{\overline{p}\}$ and $\overline{M}\setminus M=\{p\}$. At $t'$ $\overline{ALG}$ evicts $p$ if and only if $ALG$ evicts $\overline{p}$ -- in which case $M$ and $\overline{M}$ converge. Otherwise $p$ is requested at $t'$ and $ALG$, but not $\overline{ALG}$, incurs a fault; and since $\overline{ALG}$ never evicts a page unless $ALG$ also has evicted it, $|\overline{M}\setminus M|$ never increases after $t$, and drops to $0$ no later than the first fault incurred by $\overline{ALG}$ and not by $ALG$. In both cases $\overline{ALG}$ incurs no more misses than $ALG$.
\end{proof}

It is interesting to note that theorem~\ref{thm:elasticlfd} yields as an immediate corollary theorem~$4.1$ in~\cite{multicore} -- in a nutshell, for a given, dynamically changing, partition of the memory space between different processes, using LFD for each process on its own partition yields the minimum \emph{total} number of faults. It is not, however, immediately obvious that the result in~\cite{multicore} implies our theorem~\ref{thm:elasticlfd}. Furthermore, the result in~\cite{multicore} is only stated, and not proved -- the proof is deferred to the full version of the article because of its complexity compared to the ``classic'' proof of LFD's optimality. 
\newline\newline
Let us now focus on \emph{online} paging algorithms. It turns out that the dynamic $(h,k)-$competitive ratio achievable by many well-known online algorithms is almost, but not quite, as good as the ``plain'' $(h,k)-$competitive ratio $\frac{k}{k-h+1}$ -- and in particular equal to:

\begin{equation*}
\rho_{EL}(h,k)= \max_{k'\leq k, k'\in \mathcal{N}} 
\frac{k'}{k'-\lfloor h \frac{k'}{k} - \frac{h}{k}\rfloor}
\end{equation*}

The formula for $\rho_{EL}$ is more complex, but vaguely reminiscent of the formula for the ``classic'' $(h,k)$-competitive ratio; and indeed it is easy to verify that for $h=k$ both equal $k$. A detailed analysis of the behaviour of $\rho_{EL}$, including a proof that it is a lower bound on the dynamic $(h,k)-$competitive ratio of any online algorithm, can be found in the following section~\ref{sec:tight}. The remainder of this section is devoted to proving that a dynamic $(h,k)-$competitive ratio $\rho_{EL}(h,k)$ is indeed achieved by all marking algorithms\footnote{A marking algorithm marks a page in memory whenever it accesses it, never evicts a marked page, and unmarks all pages if all are marked and one must be evicted (e.g. in response to a fault or a shrink).} (including MARK, LRU and FWF), by RAND, and by all \emph{dynamically conservative} algorithms. The latter form a class of algorithms that is slightly narrower than that of conservative algorithms\footnote{A conservative algorithm never incurs more than $k$ faults on a sequence of accesses involving at most $k$ distinct pages and a memory of capacity $k$.}\cite{youngconservative} but still includes LRU, FIFO and CLOCK. The cornerstone of the analysis lies in the notion of \emph{short subsequence}, which is the ``correct'' extension of the concept of $k-$phase to dynamic capacity:

\begin{definition}
\label{def:width}
Given a generic (sub)sequence of consecutive requests, its \emph{width} is the number of distinct pages in it. 
\end{definition}

\begin{definition}
\label{def:shortsequence}
Consider a generic request sequence $\sigma$, and a subsequence $\sigma'$ of consecutive requests in $\sigma$ (including page requests, growths and shrinks). $\sigma'$ is \emph{short} if, for every prefix $\pi$ of $\sigma'$, the width of  $\pi$ does not exceed the memory capacity \emph{at the end} of $\pi$.
\end{definition}

\begin{definition}
\label{def:elasticconservative}
A \emph{dynamically conservative} algorithm never incurs more than $w$ faults on any short subsequence of width $w$.
\end{definition}

Note that a dynamically conservative algorithm is also always a conservative algorithm according to the definition of \cite{youngconservative} since with a memory of fixed capacity $k$ every short subsequence involves access to at most $k$ pages, and thus incurs at most $k$ faults. The reverse is not true: LFRU from section~\ref{sec:bad} is conservative but not dynamically conservative. However, we can easily prove:

\begin{theorem}
\label{thm:conservativeset}
LRU, FIFO and CLOCK are dynamically conservative.
\end{theorem}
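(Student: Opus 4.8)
The plan is to prove a single unifying statement covering all three algorithms: on any short subsequence $\sigma'$ of width $w$, none of LRU, FIFO, CLOCK ever faults twice on the same page. Since $\sigma'$ contains only $w$ distinct pages, this bounds its total number of faults by $w$, which is exactly what Definition~\ref{def:elasticconservative} asks for. So the whole argument reduces to ruling out a ``double fault'', and the work is in extracting, from a hypothetical second fault, a prefix that violates Definition~\ref{def:shortsequence}.

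Suppose for contradiction that some page $p$ faults twice inside $\sigma'$. Let $s_1$ be its first fault in $\sigma'$ (at which $p$ is loaded), let $t_b$ be the first time after $s_1$ at which $p$ is evicted (it must be evicted before faulting again), and observe that since $\sigma'$ is a block of consecutive requests, both $s_1$ and $t_b$, and everything between them, lie in $\sigma'$. Evictions occur only when memory is full, so just before $t_b$ the memory holds exactly $c$ pages, where $c$ is the capacity immediately before $t_b$: namely $p$ together with companions $q_1,\dots,q_{c-1}$. The crux is to show that each $q_i$ is accessed somewhere in $(s_1,t_b)\subseteq\sigma'$. Granting this, I finish uniformly: if $t_b$ is a shrink, the prefix of $\sigma'$ ending with that shrink has width at least $c$ (it contains accesses to $p$ and to all the $q_i$) while its terminal capacity is only $c-1$; if instead $t_b$ is a fault on some $y\notin\{p,q_1,\dots,q_{c-1}\}$, the prefix ending with the request for $y$ has width at least $c+1$ while its terminal capacity is $c$. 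In either case shortness is violated, the desired contradiction.

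For FIFO and LRU the key claim falls straight out of the eviction rule, so these cases are routine. Under FIFO, $p$ is evicted only as the oldest page, so every $q_i$ was brought in after $p$ was loaded at $s_1$ and hence faulted (was accessed) in $(s_1,t_b)$. Under LRU, $p$ is evicted only as least-recently-used, so every $q_i$ was accessed more recently than $p$'s last access, which is itself at least $s_1$; again each $q_i$ is accessed in $(s_1,t_b)$.

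The main obstacle is CLOCK, whose reference-bit rule only approximates recency and therefore says nothing directly about the $q_i$. Here the plan is to exploit the geometry of the clock hand via a \emph{two-revolution} observation. Because $p$ is marked when loaded at $s_1$, for CLOCK to evict $p$ the hand must first return to $p$ and clear its mark (one full revolution) and then return a second time and find it unmarked (a second full revolution); thus every slot is swept twice, with both sweeps occurring after $s_1$. Now fix a companion $q_i$ and its slot at $t_b$: the hand's two most recent visits to that slot before $t_b$ both postdate $s_1$, and for the page occupying it at $t_b$ to have survived the more recent visit it was either loaded there during that visit (a fault inside $\sigma'$) or found marked, which forces an access to it strictly after the earlier visit, hence after $s_1$. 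The reason for insisting on two revolutions rather than one is precisely this: any ``stale'' mark inherited from before $\sigma'$ is consumed by the first sweep, so surviving the second sweep can only be paid for by a genuine access in $(s_1,t_b)$. I expect the delicate points to be purely bookkeeping: checking that a slot's occupant cannot change silently between two consecutive hand visits, and tracking how intervening growths and shrinks alter the number of slots (and can themselves advance the hand) during the two sweeps; neither should disturb the core logic, which then closes the CLOCK case exactly as above.
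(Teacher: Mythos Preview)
Your proof is correct and reaches the same conclusion as the paper—no page accessed inside a short subsequence is evicted before the subsequence ends—but packages the argument differently. The paper first isolates a single \emph{priority property} valid for all three algorithms: if $p$ is brought into memory at time $t$ and $p'$ was already in memory at $t$ and is never accessed again, then $p'$ is evicted before $p$. From this it concludes in one line that, since shortness guarantees capacity always suffices for the pages accessed so far, any eviction must fall on some page not accessed in the subsequence, so accessed pages are never evicted. Your route is essentially the contrapositive: you fix the eviction of $p$ at $t_b$ and show every companion $q_i$ was accessed in $(s_1,t_b)$, then contradict shortness by an explicit width-versus-capacity count (handling the shrink and fault cases at $t_b$ separately). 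For LRU and FIFO the two framings are trivially interchangeable; for CLOCK both ultimately rest on the same ``two passes of the hand'' observation, but the paper phrases it positionally (``the hand encounters $p'$ before $p$, hence unmarks and later evicts $p'$ first'') and thereby avoids tracking slots through growths and shrinks—the bookkeeping you correctly flag as delicate. What the paper's packaging buys is modularity (one property, three one-line verifications, no slot geometry); what yours buys is that the contradiction with Definition~\ref{def:shortsequence} is spelled out in full, which the paper leaves to the reader.
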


\begin{proof}
It is not difficult to verify that all three algorithms have the following property: if a page $p$ is brought into memory at time $t$, and a page $p'$ already in memory at time $t$ and is never accessed again, then $p'$ will be evicted before $p$. This holds for LRU because $p$ is more recently accessed than $p'$. It holds for FIFO because $p'$ entered the memory before $p$. It holds for CLOCK because after $t$ the unmark/evict process will encounter $p'$ before encountering $p$ -- thus either evicting or at least unmarking $p'$ before unmarking $p$, and thus certainly evicting it before evicting $p$. Then none of the three algorithms evicts a page accessed during a short sequence before the end of the sequence (since there is always sufficient memory to hold all pages accessed during the sequence), and thus none can incur more faults than the width of the sequence.
\end{proof}

The main result of this section is then:
\begin{theorem}
\label{thm:marking}
The dynamic $(h,k)-$competitive ratio of any online paging algorithm that is either marking or dynamically conservative is no larger than $\rho_{EL}(h,k)$.
\end{theorem}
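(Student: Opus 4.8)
The plan is to reduce the dynamic-capacity analysis to a per-block comparison. I would partition the request sequence into maximal \emph{short} subsequences (definition~\ref{def:shortsequence}) $\sigma_1,\dots,\sigma_N$, obtained greedily from the left, bound the online cost against OPT block by block, and finish with a global telescoping step that recovers the floor terms in $\rho_{EL}$. First I would record two structural facts about this partition. Since every prefix of a short subsequence has width at most the capacity at its end, and the online capacity never exceeds $k$, each block has width $w_j\le k$; this is exactly what restricts the maximization in $\rho_{EL}$ to $k'\le k$. More importantly, maximality forces the capacity \emph{at the end} of every non-final block to equal its width: the request $q$ that breaks $\sigma_j$ makes $\sigma_j q$ non-short, and since a growth can never destroy shortness, $q$ is either a genuinely new page (a \emph{width failure}) or a shrink (a \emph{shrink failure}); in either case the breaking condition forces the end-capacity to be $\le w_j$, while shortness forces it to be $\ge w_j$.

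For the online upper bound I would show that \emph{both} classes incur at most $w_j$ faults on a block of width $w_j$. For a dynamically conservative algorithm this is immediate from definition~\ref{def:elasticconservative}. For a marking algorithm I would argue, paralleling the classic bound of one fault per page per $k$-phase, that shortness guarantees that at every prefix the pages already accessed inside the block (at most the width so far) fit within the current capacity, so the algorithm always has either a spare slot or an as-yet-unaccessed page to surrender and never evicts a page accessed within the block before the block ends; hence at most one fault per distinct page, i.e.\ $\le w_j$.

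The heart of the proof is the OPT lower bound, and here I would use disjoint, classic-style intervals $I_j$ running from just after the first page of $\sigma_j$ through the first page of $\sigma_{j+1}$. On $I_j$, OPT faults at least the number of distinct pages other than the block's first page, minus the pages it can hold over from before; since that first page already occupies one of OPT's (at most $\lfloor\frac{h}{k}m_j\rfloor$) slots, this yields the ``$-1$'' exactly as in the static proof, where $m_j$ is the online capacity at the start of $I_j$. The two breaking modes then each contribute \emph{one unit of slack}: a width failure after $\sigma_j$ supplies an extra new distinct page to $I_j$ (the classic ``$+1$''), whereas a shrink failure instead drops the online capacity at the start of $I_{j+1}$ from $w_j$ to $w_j-1$, lowering how many pages OPT can carry in. Writing the resulting per-block bounds and summing, the key step is a telescoping identity showing that these $N$ slack units dominate $\sum_j\big(\lfloor\frac{h}{k}w_j\rfloor-\lfloor\frac{h}{k}(w_j-1)\rfloor\big)$, which is at most $N$; concretely, each block's net contribution to the gap between the obtained bound and $\sum_j\big(w_j-\lfloor\frac{h}{k}(w_j-1)\rfloor\big)$ is nonnegative, the floor increment being absorbed by its paired slack unit. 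Combining this with $w_j\le k$ and the definition of $\rho_{EL}$ gives $c_{OPT}\ge \frac{1}{\rho_{EL}(h,k)}\sum_j w_j-O(1)\ge \frac{1}{\rho_{EL}(h,k)}\,c_{ALG}-O(1)$, the additive $O(1)$ collecting the first/last block and the capacity-startup effects.

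I expect the main obstacle to be precisely this OPT lower bound under intra-block capacity changes: decoupling the width $w_j$ that governs the online cost from the capacity that governs OPT's carry-over, quantities that in general are read at different points of the block. The naive per-block bound (using OPT's capacity at the block start) is too weak and yields only $\max_{k'\le k}\frac{k'}{k'-\lfloor\frac{h}{k}k'\rfloor}$; recovering the sharper $\rho_{EL}$ requires treating width- and shrink-failures as interchangeable sources of a single slack unit, together with the global telescoping, and it requires choosing the interval endpoints so that the $I_j$ stay disjoint (so overlaps do not accumulate into an $\Omega(N)$ error) and so that a shrink-failure is charged to the \emph{following} block through its reduced starting capacity. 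The marking upper bound is a secondary subtlety, since marks set before a block persist into it and must be shown not to force the eviction of a page accessed within the block.
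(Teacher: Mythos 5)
Your proposal follows essentially the same route as the paper's proof: the same greedy partition into maximal short subsequences, the same dichotomy between page-bound (``width failure'') and capacity-bound (``shrink failure'') blocks, the same shifted disjoint intervals for bounding OPT with each breaking mode contributing one unit of slack (an extra distinct page versus a reduced carry-over capacity $\lfloor\frac{h}{k}(w_j-1)\rfloor$), and the same telescoping and max-of-ratios finish. The only gap worth noting is minor: for marking algorithms you flag but do not discharge the need to show all marks are cleared at the start of each block, which the paper handles by observing that the block-breaking request (a shrink or a new page arriving when memory is full of marked pages) forces a global unmark.
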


\begin{proof}
Let us begin with marking algorithms.
The proof bears some resemblance to that of the static case, with a number of subtle but profound differences. One such difference is that, instead of partitioning the request sequence into maximal length phases each involving access to $k$ distinct pages, we partition it into maximal short sequences $\pi_1,\dots,\pi_n$ where $\pi_i$ is the longest short sequence beginning immediately after the end of $\pi_{i-1}$. 

Denote by $w_i$ the width of $\pi_i$. We can assume without loss of generality that the request sequence ends with a page request, so $w_i>0~\forall i$. Note that, for $i>1$, the first request of $\pi_{i}$ must be either a shrink or a request for a page not in $\pi_{i-1}$; in the first case we say that $\pi_{i-1}$ is \emph{capacity bound}, in the second that it is \emph{page bound}.

It is easy to verify by simultaneous induction the following claims hold for all $i$:
\begin{enumerate}
\item All pages in memory are unmarked when $\pi_{i,1}$ is serviced.
\item Every one of the $w_i$ pages accessed during $\pi_i$ (and no other page) remains marked and thus in memory until the end of $\pi_i$.
\item Immediately before $\pi_{i+1,1}$ is serviced, the memory is full and holds $w_i$ pages, all marked.
\end{enumerate}

Claim $1$ holds trivially for $i=1$.
If Claim $1$ holds for $i$, Claim $2$ also holds for $i$, since until the end of $\pi_i$ the memory is large enough to accommodate all pages accessed so far during $\pi_i$, which are the only ones marked.
If Claim $2$ holds for $i$, Claim $3$ also holds for $i$, since immediately before $\pi_{i+1,1}$ is serviced the memory capacity exactly matches the number of distinct pages accessed in $\pi_i$.
If Claim $3$ holds for $i$, Claim $1$ holds for $i+1$ (proving the inductive step), since the first request of $\pi_{i+1}$ must be either a shrink or a request for a page not in $\pi_{i}$, and thus causes all pages in memory to become unmarked.

From Claim $2$ it is obvious that a marking algorithm incurs a number of faults at most equal to $w_i$ during short sequence $\pi_i$, for a total number of faults equal to at most:

\begin{equation}
\label{eqn:elasticonline}
c_{ALG}\leq\sum_{i=1}^n w_i
\end{equation}

Let us compute the number of faults incurred by any other algorithm $\overline{ALG}$ with a memory of capacity at most $\frac{h}{k}$ times that of the marking algorithm, in the interval $\pi'_i$ from immediately after the first request $\sigma$ of $\pi_i$ is serviced, to immediately after the first request of $\pi_{i+1}$ is serviced or to the end of the request sequence if $i=n$. Let $r_i$ be equal to $1$ if $\pi_i$ is page bound, and to $0$ if it is capacity bound, for $1\leq i<n$, and let $r_0=0$ and $r_n=0$. Remember that the first request of a short phase $\pi_i$ is a shrink if $\pi_{i-1}$ is capacity bound, and a page not in $\pi_{i-1}$ if $\pi_{i-1}$ is page bound -- and a growth if $i=1$. Denoting by $w'_i$ the number of distinct pages in $\pi'_i$ after removing the page involved in the first request of $\pi_i$ if any (i.e. if $\pi_{i-1}$ is page bound), we can then write for $1\leq i\leq n$:

\begin{equation}
\label{eqn:distinctpages}
w'_i= -r_{i-1}+w_i+r_{i}
\end{equation}

The subset of these pages in the memory of $\overline{ALG}$ immediately before servicing the first request of $\pi'_{i}$ is then at most:

\begin{equation}
\label{eqn:reservoir}
m_i=
\begin{cases} 
	0 & \text{if $i=1$,}\\
	\lfloor\frac{h}{k}(w_{i-1}-1)\rfloor  &\text{if $i>1$.}
\end{cases}
\end{equation}

Equation~\ref{eqn:reservoir} is immediate if $i=1$ or if $\pi_{i-1}$ is capacity bound - since then the first request of $\pi_{i}$ shrinks the memory available to $ALG$ from $w_{i-1}$ to $w_{i-1}-1$. If instead $\pi_{i-1}$ is page bound, of the $\lfloor\frac{h}{k}w_{i-1}\rfloor$ pages $\overline{ALG}$'s memory can hold, one must be the first page of $\pi_i$ that has just been requested and that does not contribute to $m_i$ -- leaving only $\lfloor\frac{h}{k}w_{i-1}\rfloor-1 \leq \lfloor\frac{h}{k}(w_{i-1}-1)\rfloor$. 
Then the total number of faults incurred by $\overline{ALG}$ is at least:

\begin{equation}
\label{eqn:elasticopt}
\begin{split}
& \sum (w'_i-m_i)\\
        \geq & \sum_{i=1}^n (-r_{i-1}+w_i+r_{i}) 
        - \sum_{i=2}^{n} \lfloor\frac{h}{k}(w_{i-1})-1)\rfloor\\
        \geq & \sum_{i=1}^n (w_i - \lfloor\frac{h}{k}(w_i-1)\rfloor)
\end{split}
\end{equation}

Remembering that both $w_i$ and $w_i - \lfloor\frac{h}{k}(w_i-1)\rfloor$ with $h\leq k$ are positive, and that $\forall a,b,c,d>0$ we have that 
$\frac{a+b}{c+d}=
\frac{c}{c+d}\cdot\frac{a}{c}+\frac{d}{c+d}\cdot\frac{b}{d}
\leq\max(\frac{a}{c},\frac{b}{d})$,
the dynamic $(h,k)-$competitive ratio of $ALG$ is at most:

\begin{equation}
\label{eqn:elasticratio}
\begin{split}
& \frac{\sum_{i=1}^{n} w_i} {\sum_{i=1}^n (w_i - \lfloor\frac{h}{k}(w_i-1)\rfloor)}\\
\leq & \max_i \frac{w_i}{w_i - \lfloor\frac{h}{k}(w_i-1)\rfloor}\\
\leq & \max_{k'\in\{1,\dots,k\}}
\frac{k'}{k' - \lfloor h\frac{k'}{k}-\frac{h}{k}\rfloor}
\end{split}
\end{equation}

This proves the theorem for marking algorithms. The proof for dynamically conservative algorithms proceeds identically, except for the fact that in this case one can immediately obtain, from definition~\ref{def:elasticconservative}, the bound given by Equation~\ref{eqn:elasticonline} on the cost incurred by the online algorithm. 
\end{proof}

The proof of theorem~\ref{thm:marking} is vaguely reminescent of that for marking and conservative algorithms in ``classic'' paging, but is considerably more complex: for example, the strategy of analysing each short subsequence in isolation does not work, and one can only bound the ratio over the whole sequence, through careful accounting and a potential argument. The analysis of RAND faces similar difficulties in terms of ``compartimentalization of costs''; but they are addressed in a different way due to the randomized nature of the algorithm (by exploiting its lack of memory). In this sense it may be somewhat surprising that exactly the same bound obtained in theorem~\ref{thm:marking} also applies to RAND:

\begin{theorem}
\label{thm:elasticrand}
RAND's dynamic $(h,k)-$competitive ratio is no larger than $\rho_{EL}(h,k)$ in the adaptive online adversary model.
\end{theorem}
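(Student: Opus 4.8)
The plan is to retain the skeleton of the proof of Theorem~\ref{thm:marking} — the partition of the request sequence into maximal short subsequences $\pi_1,\dots,\pi_n$ of widths $w_1,\dots,w_n$, together with the lower bound~\ref{eqn:elasticopt} on the cost of any offline algorithm running with capacity $\lfloor\frac{h}{k}\mu\rfloor$, namely $c_{OPT}\geq\sum_i\bigl(w_i-\lfloor\frac{h}{k}(w_i-1)\rfloor\bigr)$ — and to replace only the online bound~\ref{eqn:elasticonline}. This is exactly where RAND departs from a marking algorithm: RAND may evict a page still needed within the current short subsequence and then fault on it again, so its expected number of faults during $\pi_i$ is in general strictly larger than $w_i$ (a memory left ``polluted'' by foreign pages at the start of $\pi_i$ can force $\Theta(w_i)$ extra thrashing faults before it is flushed). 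Hence, unlike for marking, the per-subsequence bound $w_i$ is simply false, one cannot control the online cost subsequence by subsequence, and the argument must be global.

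First I would make precise the ``lack of memory'' of RAND: conditioned on the current cache content, the distribution of RAND's future behaviour is independent of how that content was reached, and a page present in the cache survives each subsequent eviction from a cache of capacity $m$ with probability exactly $1-\frac1m$. Against an adaptive \emph{online} adversary this is the crucial feature, since the reference algorithm's choices — and the request sequence itself — may depend only on RAND's \emph{past} coin flips, so at each step the survival probabilities above are genuinely valid (this is precisely why the bound holds only in the online adaptive model). I would then set up an amortized (potential) argument charged against the reference offline algorithm $\overline{ALG}$ with capacity $\lfloor\frac{h}{k}\mu\rfloor$: defining a potential $\Phi$ proportional to the expected number of pages that $\overline{ALG}$ holds but RAND does not (equivalently, measuring the expected ``stale'' content of RAND's cache), and processing page requests, growths and shrinks one at a time. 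A growth leaves $\Phi$ unchanged and costs nothing; a shrink at capacity $m$ forces RAND to evict a uniformly random page, whose expected effect on $\Phi$ must be bounded; and on a page request I would show that the expected fault cost plus the expected change in $\Phi$ is at most the \emph{local} competitive factor $\frac{m}{m-\lfloor\frac{h}{k}(m-1)\rfloor}$ times the reference's cost, where $m\leq k$ is RAND's current capacity. Telescoping over the whole sequence and bounding every local factor by its maximum over the attained capacities — exactly the kind of max-over-$k'$ already appearing in~\ref{eqn:elasticratio} — would then yield $E[c_{RAND}]\leq\rho_{EL}(h,k)\,c_{OPT}+d$, the additive constant absorbing the bounded initial and final potentials.

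The hard part will be the single-step amortized inequality on a page request: I must simultaneously track RAND's random cache against an adversary free to choose both the worst ``polluting'' cache at the start of a short subsequence and the worst order of re-accesses within it, and to do so while the capacity $m$ itself fluctuates. Two points demand care. The first is getting the exact constant: the naive potential $\Phi=c\,|\overline{M}\setminus R|$ only yields a local factor whose denominator is $m-\lfloor\frac{h}{k}m\rfloor$ rather than $m-\lfloor\frac{h}{k}(m-1)\rfloor$, and recovering the missing unit — the extra ``$+1$'' that distinguishes $\rho_{EL}$ from the cruder bound — requires exploiting that on a fault the newly requested page is present in both caches immediately afterwards, so one overlapping page may be discounted (precisely the discount already used in passing from~\ref{eqn:reservoir} to~\ref{eqn:elasticopt}). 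The second is that the telescoped bound produces a maximum of the local factor only over the capacities $m$ actually realized by $\mu$; since all of these lie in $\{1,\dots,k\}$, this maximum is at most $\max_{k'\leq k}\frac{k'}{k'-\lfloor\frac{h}{k}(k'-1)\rfloor}=\rho_{EL}(h,k)$ via the identity $\lfloor\frac{h}{k}(k'-1)\rfloor=\lfloor h\frac{k'}{k}-\frac{h}{k}\rfloor$, which is exactly the quantity we are after.
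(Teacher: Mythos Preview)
Your instinct to abandon the short-subsequence decomposition and run a global potential argument is correct, and the paper does exactly that. But the specific potential you commit to, $\Phi=c\,|\overline{M}\setminus R|$, does \emph{not} survive the shrink case, and this is precisely the step you leave as ``must be bounded'' without saying how. On a shrink at RAND capacity $m$ with full memory, RAND evicts a uniformly random page; with probability $|R\cap\overline{M}|/m$, which can be as large as $\lfloor\frac{h}{k}(m-1)\rfloor/m>0$, the evicted page lies in $\overline{M}$ and $|\overline{M}\setminus R|$ goes \emph{up}. Since neither RAND nor the reference incurs a fault on a shrink, the amortized cost of that step is strictly positive while the reference cost is zero, and the per-step inequality you need simply fails. (Your parenthetical that $|\overline{M}\setminus R|$ ``equivalently'' measures the stale content of RAND's cache is also off: the stale content is $|R\setminus\overline{M}|$, the complementary set, and in the dynamic setting the two differ by the fluctuating quantity $|\overline{M}|-|R|$.)

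The paper fixes this with two moves you are missing. First, it takes the dual potential, the \emph{garbage} $G=R\setminus\overline{M}$: RAND evictions can only \emph{decrease} $|G|$, and do so with probability at least $p_{RAND}=\min_{k'}\frac{k'-\lfloor\frac{h}{k}(k'-1)\rfloor}{k'}$, while $|G|$ increases only on OPT evictions. Second, and crucially, it compares \emph{evictions} $e_{RAND},e_{OPT}$ rather than faults; this puts shrinks and faults on exactly the same footing (both trigger a RAND eviction with the same probability bound), yielding $E[|G|]\le e_{OPT}-p_{RAND}\,e_{RAND}$. The conversion back to faults is then a one-line trick: append enough shrinks to empty both memories, which does not change either fault count but forces $|G|=0$ and $e_{\bullet}=c_{\bullet}$. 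With the right potential and the eviction-count viewpoint, the ``$-1$'' you worry about in the denominator falls out automatically from the fact that on a shrink OPT has already adjusted to capacity $\lfloor\frac{h}{k}(k'-1)\rfloor$, and on a fault the just-requested page occupies one of OPT's slots.
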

\begin{proof}
While the proof bears a few similarities to the analysis in the static case, it requires subtlety and a somewhat different approach due to the possible fluctuations of memory capacity -- and  in particular to the fact that, if $h < k$, cache shrinks may not be ``synchronized'' and RAND may incur shrinks when the optimal offline algorithm OPT does not. Instead of comparing the number of faults $c_{RAND}$ and $c_{OPT}$ incurred, respectively, by RAND and OPT, we then begin by comparing the number of \emph{evictions} $e_{RAND}$ and $e_{OPT}$. For simplicity, assume that, after any given request (for a page, or for a capacity change), the request is served in the following order. OPT performs any eviction(s); then it loads into memory any requested page not yet there; then RAND does the same; finally, OPT adjusts its memory capacity, and then RAND does the same. 

Let the \emph{garbage} of RAND at any given point in time be the set $G$ of pages in its memory and not in the memory of OPT. First of all, note that $G$ can increase only when OPT incurs an eviction (and at most by $1$ page for each eviction), since RAND never brings into memory a page not requested by OPT -- which at that point must then be in OPT's memory. 

Immediately before RAND incurs an eviction, its memory must be full; denote by $k'$ and $h'$ the memory capacity of RAND and OPT at that point. If the eviction is the result of a shrink, then the number of pages in RAND's memory that are \emph{not} garbage are at most:
\begin{equation}
\label{eqn:randshrink}
h'=\lfloor\frac{h}{k}(k'-1)\rfloor
\end{equation}

Note that at this point OPT has adjusted its memory capacity to the shrink but RAND has not. If the eviction is the result of a page fault, then the requested page at this point is in OPT's memory but not in RAND's, and the number of pages in RAND's memory that are \emph{not} garbage are at most:

\begin{equation}
\label{eqn:randfault}
h'-1=\lfloor\frac{h}{k}k'\rfloor-1\leq \lfloor\frac{h}{k}(k'-1)\rfloor
\end{equation}

Thus the probability that, when RAND incurs an eviction, $|G|$ decreases by $1$ is at least:
\begin{equation}
\label{eqn:randprob}
p_{RAND}=\min_{k'\in\{1,\dots,k\}} \frac{k'-\lfloor\frac{h}{k}(k'-1)\rfloor}{k'}
\end{equation}

and at any given time we have that, in expectation:

\begin{equation}
\label{eqn:randpotential}
|G|\leq e_{OPT}- e_{RAND}\cdot p_{RAND}
\end{equation}

Appending to any request sequence sufficient shrinks to bring $RAND$'s memory capacity to $0$ obviously brings $|G|$ to $0$, without increasing the number of \emph{faults} incurred by $RAND$ or $OPT$. For any algorithm that evicts a single page at a time, when the memory holds no pages the number of faults and evictions incurred must coincide. Setting $|G|$ to $0$, as well as $c_{OPT}=e_{OPT}$ and $c_{RAND}=e_{RAND}$, in Equation~\ref{eqn:randpotential} then yields for RAND a dynamic $(h,k)-$competitive ratio equal to at most:

\begin{equation}
\label{eqn:randratio}
\frac{c_{RAND}}{c_{OPT}}\leq \frac{1}{p_{RAND}} = 
\max_{k'\in\{1,\dots,k\}} \frac{k'}{k'-\lfloor h\frac{k'}{k} - \frac{h}{k}\rfloor}
\end{equation}
\end{proof}

\vspace*{3mm}
\section{An exact characterization of the competitive ratio}
\label{sec:tight}
\vspace*{1mm}

The upper bound $\rho_{EL}(h,k)$ obtained in the previous section~\ref{sec:good} for the dynamic $(h,k)-$competitive ratio of many online paging algorithms  is actually tight. More formally, we can prove:

\begin{theorem}
\label{thm:elasticcompetitivelb}
No online paging algorithm has a dynamic $(h,k)-$competitive ratio (against any online of offline adaptive adversary if randomized) lower than:
\begin{equation*}
\rho_{EL}(h,k)= max_{k'\in\{1,\dots,k\}} 
\frac{k'}{k'-\lfloor h \frac{k'}{k} - \frac{h}{k}\rfloor}
\end{equation*}
\end{theorem}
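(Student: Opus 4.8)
The plan is to show that for every online algorithm $ALG$ and every additive constant $d$ there are sequences $\pi$ and $\mu$ (with all $m_i\le k$) on which $c_{ALG}(\pi,\mu) > \rho_{EL}(h,k)\cdot c_{OPT}(\pi,\lfloor\frac hk\mu\rfloor)+d$. I would begin by fixing the index $k^{*}\in\{1,\dots,k\}$ attaining the maximum in the definition of $\rho_{EL}$ and writing $h^{*}=\lfloor h\frac{k^{*}}{k}-\frac hk\rfloor=\lfloor\frac hk(k^{*}-1)\rfloor$, so that the target ratio is $\frac{k^{*}}{k^{*}-h^{*}}$. The entire construction keeps the online capacity oscillating around $k^{*}$ (never above $k\ge k^{*}$) and is organized into identical \emph{phases}, each of width $k^{*}$, designed so that $ALG$ is driven to fault on all $k^{*}$ distinct pages of the phase while a suitable offline algorithm faults on only $k^{*}-h^{*}$ of them. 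Summing over a number $N$ of phases that I let grow without bound then gives $c_{ALG}\ge k^{*}N$ and $c_{OPT}\le(k^{*}-h^{*})N+O(1)$, so the additive term is absorbed and the ratio tends to $\rho_{EL}(h,k)$.

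Next I would dispose of the adversary/randomization model. Because the adversary is adaptive, at each page request it can name the page currently \emph{absent} from $ALG$'s (realized) memory; with a universe of $k^{*}+1$ pages this forces a fault on every page request regardless of $ALG$'s internal, possibly random, choices, so $c_{ALG}$ equals the number of page requests for every realization and the lower bound on $c_{ALG}$ is immediate and distribution-free. For the benchmark I would \emph{not} try to control a single $OPT$ directly — a clever $ALG$ can steer the ``request the missing page'' sequence so as to hurt any one fixed offline algorithm — but instead average over a carefully chosen family of offline algorithms, each running with the reduced capacity $\lfloor\frac hk\mu\rfloor$, and argue that the \emph{average} number of faults over the family is at most $k^{*}-h^{*}$ per phase; since the minimum is no larger than the average, the optimal offline algorithm inherits the same bound. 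The single-page shrink placed at each phase boundary is what makes the family's effective capacity equal $h^{*}=\lfloor\frac hk(k^{*}-1)\rfloor$ rather than $\lfloor\frac hk k^{*}\rfloor$ at the one instant where it matters, and it is precisely this ``$-1$'' inside the floor that upgrades the ordinary static bound $\frac{k^{*}}{k^{*}-\lfloor\frac hk k^{*}\rfloor+1}$ into the sharper $\rho_{EL}(h,k)$.

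I expect the main obstacle to be the tension between the two halves of each phase: forcing an \emph{arbitrary} $ALG$ to fault on all $k^{*}$ pages requires those pages to be absent from its memory, yet the benchmark must simultaneously \emph{retain} $h^{*}$ of them, and neither page churn nor capacity shrinks can, on their own, strip $ALG$ of a page while sparing a smaller-capacity offline memory (a shrink by one forces $ALG$ to drop one page of its own choosing, and any request pattern that evicts a page from $ALG$'s $k^{*}$ cells evicts it from the offline's $\approx\frac hk k^{*}$ cells at least as readily). Reconciling these is exactly where the averaging family and the precise timing of the single-page fluctuations must do the work, and where a potential/accounting argument — telescoping the per-phase contributions as in the proof of Theorem~\ref{thm:marking}, rather than analysing phases in isolation — seems unavoidable. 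A secondary, more technical point is to verify that the competing offline algorithm can be realized \emph{causally}, with its evictions depending only on $ALG$'s \emph{past} choices, so that the bound holds against the weaker online adaptive adversary and not merely the offline one; here I would exploit that, under the fixed capacity schedule, each forced request is determined by $ALG$'s already-revealed state, so the offline family can itself be run online.
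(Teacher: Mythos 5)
Your overall strategy --- an adaptive adversary that always requests a page absent from $ALG$'s realized memory, a benchmark obtained by averaging over which pages an offline algorithm pins permanently, and shrinks used to smuggle the ``$-1$'' inside the floor --- is the same as the paper's. But the obstacle you flag at the end (the offline algorithm must simultaneously \emph{retain} $h^{*}$ pages and \emph{serve} the remaining requests) is not a technicality to be absorbed by a potential argument: as described, your construction does not resolve it, and its resolution is the one idea your sketch is missing. With the online capacity held near $k^{*}$ and only a single-page shrink at each phase boundary, the offline capacity during a phase is $\lfloor\frac{h}{k}k^{*}\rfloor$, which can equal $h^{*}=\lfloor\frac{h}{k}(k^{*}-1)\rfloor$ exactly (e.g.\ $h=2$, $k=10$, $k^{*}=6$ gives both floors equal to $1$, and $k^{*}=6$ is the actual maximizer there); then the offline has no spare cell, must evict a pinned page to serve each unpinned request, and incurs at least $k^{*}-h^{*}+1$ faults per phase rather than $k^{*}-h^{*}$. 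The resulting bound is only $\max_{k'}\frac{k'}{k'-\lfloor\frac{h}{k}(k'-1)\rfloor+1}$, which is far from tight: for $h=k$ it gives roughly $k/2$ instead of $k$. No telescoping repairs this, because the deficit is real --- the offline genuinely needs one extra cell at the instant of each request.

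The paper supplies that cell by making the capacity oscillation large and \emph{per request} rather than small and per phase: the online capacity sits at $\overline{k}-1$ (with a universe of exactly $\overline{k}$ pages, so some page is always missing and $ALG$ faults on every request), and immediately before each page request it grows to $k$, shrinking back to $\overline{k}-1$ immediately afterwards. The growth is too brief for $ALG$ to exploit, but it raises the offline capacity to $h$, which is strictly larger than $h^{*}\le\frac{h}{k}(k-1)<h$; hence the offline always has at least one temporary cell in which to serve an unpinned page, and that cell is reclaimed by the ensuing shrinks at no cost. With this in place your averaging step goes through essentially verbatim (pin the $h^{*}$ pages with the most \emph{expected} requests; they receive at least an $h^{*}/\overline{k}$ fraction of the $n\overline{k}$ requests), the offline cost is at most $h^{*}+n(\overline{k}-h^{*})$, and the ratio tends to $\rho_{EL}(h,k)$ with no per-phase bookkeeping or potential function at all. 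Your causality remark is correct and matches the paper: the pinned set depends only on $ALG$'s distribution and the next request only on $ALG$'s current realized contents, so the bound holds against the online adaptive adversary.
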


\begin{proof}
Let $\overline{k} = \argmax_{k'\in\{1,\dots,k\}} 
\frac{k'}{k'-\lfloor h \frac{k'}{k} - \frac{h}{k}\rfloor}$, 
and let $ALG$ be a generic online paging algorithm.
Consider a request sequence 
$\sigma_n=<+^{(\overline{k}-1)},\pi_1,\dots,\pi_n>$, where:

\begin{equation}
\label{eqn:onlineworstsubsequence}
\pi_i=<+^{(k-\overline{k}+1)},p_{i,1},-^{(k-\overline{k}+1)}, \dots, 
+^{(k-\overline{k}+1)}, p_{i,\overline{k}}, -^{(k-\overline{k}+1)}>
\end{equation}
and $p_{i,j}$ is any one page, from the set $p_1,\dots,p_{\overline{k}}$, that is not in $ALG$'s memory just before it is requested -- note that immediately before any page request ALG's memory holds at most $\overline{k}-1$ pages, so there always exists one such page. $ALG$ then incurs a fault on every page request, for a total number of faults equal to:

\begin{equation}
\label{eqn:onlinelbonlinecost}
c_{ALG}(\sigma_n) = n\cdot\overline{k}
\end{equation} 

Consider an offline algorithm $\overline{ALG}$ with access to a memory that has at most $\frac{h}{k}$ times the capacity of $ALG$'s at any given time; in particular, $\overline{ALG}$'s memory capacity grows to $h$ immediately before any page request, and immediately afterwards drops to capacity: 

\begin{equation}
\label{eqn:onlinelbofflinesize}
\lfloor\frac{h}{k}(\overline{k}-1)\rfloor=
\lfloor h \frac{\overline{k}}{k} - \frac{h}{k}\rfloor < h
\end{equation}

$\overline{ALG}$ can easily maintain in its ``permanent'' 
$\lfloor h \frac{\overline{k}}{k} - \frac{h}{k}\rfloor$ memory locations the 
$\lfloor h \frac{\overline{k}}{k} - \frac{h}{k}\rfloor$ pages with most \emph{expected} accesses in $\sigma_n$, incurring for each only one initial fault.
Note that the total number of accesses to these pages is, in expectation, at least
$n\overline{k}\cdot\frac{\lfloor h \frac{\overline{k}}{k} - \frac{h}{k}\rfloor}{\overline{k}}=n \lfloor h \frac{\overline{k}}{k} - \frac{h}{k}\rfloor$. Every other page, when requested, is brought into the ``temporary'' location(s) immediately eliminated by the following shrink. $\overline{ALG}$ then incurs an expected number of faults equal to:

\begin{equation}
\label{eqn:onlinelbofflinecost}
c_{\overline{ALG}}(\sigma_n)\leq\lfloor h \frac{\overline{k}}{k} - \frac{h}{k}\rfloor+
n(\overline{k}-\lfloor h \frac{\overline{k}}{k} - \frac{h}{k}\rfloor)
\end{equation}

Then the competitive ratio of $ALG$ can be no lower than:

\begin{equation}
\label{onlineratiolimit}
\begin{split}
lim_{n\rightarrow\infty}\frac{c_{ALG}(\sigma_n)}{c_{\overline{ALG}}(\sigma_n)}=
& lim_{n\rightarrow\infty} \frac{n\overline{k}}
{\lfloor h \frac{\overline{k}}{k} - \frac{h}{k}\rfloor+
n(\overline{k}-\lfloor h \frac{\overline{k}}{k} - \frac{h}{k}\rfloor)}\\
& = \max_{k'\in\{1,\dots,k\}}\frac{k'}
{k'-\lfloor h \frac{k'}{k} - \frac{h}{k}\rfloor}
\end{split}
\end{equation}

It is important to observe that, if $ALG$ is randomized, $\overline{ALG}$ need only know $ALG$'s probabilistic behaviour to choose which pages to keep in its own memory; and it can choose which page to request next based only on $ALG$'s current memory contents. Thus the lower bound we proved holds for deterministic and randomized algorithms both in the adaptive offline and in the adaptive online adversary models. 
\end{proof}

As noted in section~\ref{sec:good} the expression of the optimal dynamic 
$(h,k)$-compet-itive ratio $\rho_{EL}(h,k)$
appears considerably more complex than, but vaguely reminiscent of, that of the ``classic'' bound on the $(h,k)-$competitive ratio, $\frac{k}{k-h+1}$. It is natural to ask whether the two are actually different, and if so to what extent. We show that $\rho_{EL}(h,k)$ \emph{is}, in fact, a factor $\approx 1+\frac{1}{k}$ larger for some ``natural'' values of $h$ and $k$ -- though it is never more than a factor $1+\frac{1}{k}$ larger, and actually coincides with $\frac{k}{k-h+1}$ if $h$ is equal or very close to $k$. This is stated more formally in the following two theorems:

\begin{theorem}
\label{thm:onlinelbstrict}
For any odd $h$ and $k=2h$, 
$\rho_{EL}(h,k)\geq (1+\frac{1}{k}-\frac{2}{k^2}) \frac{k}{k-h+1}$.
\end{theorem}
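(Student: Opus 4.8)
The plan is to exploit the fact that $\rho_{EL}(h,k)$ is a \emph{maximum} over $k'\in\{1,\dots,k\}$, so that evaluating the defining ratio at any single admissible value of $k'$ immediately yields a lower bound on $\rho_{EL}$. It therefore suffices to identify one convenient $k'$ and to compare the resulting term with the target quantity $(1+\frac{1}{k}-\frac{2}{k^2})\frac{k}{k-h+1}$.

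First I would specialize the formula to $k=2h$. Here $\frac{h}{k}=\frac12$, so the argument of the floor becomes $h\frac{k'}{k}-\frac{h}{k}=\frac{k'-1}{2}$, and the generic term reduces to $\frac{k'}{k'-\lfloor (k'-1)/2\rfloor}$. The key observation is that the floor is harmless exactly when $k'$ is odd. Since $h$ is odd, $k=2h$ is even and $k'=2h-1$ is an odd integer lying in $\{1,\dots,k\}$, hence an admissible choice. For this $k'$ one has $\frac{k'-1}{2}=h-1$, already an integer, so the denominator equals $(2h-1)-(h-1)=h$ and the term equals $\frac{2h-1}{h}$. This gives at once $\rho_{EL}(h,k)\geq\frac{2h-1}{h}$.

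It then remains only to check that this lower bound matches the claimed quantity. I would rewrite the factor as $1+\frac{1}{2h}-\frac{1}{2h^2}=\frac{2h^2+h-1}{2h^2}=\frac{(2h-1)(h+1)}{2h^2}$, and note that with $k=2h$ the classic ratio is $\frac{k}{k-h+1}=\frac{2h}{h+1}$. Multiplying these, the factors $(h+1)$ and $2h$ cancel and leave exactly $\frac{2h-1}{h}$. Thus the asserted inequality in fact holds with \emph{equality}, and $\rho_{EL}(h,k)\geq(1+\frac1k-\frac{2}{k^2})\frac{k}{k-h+1}$ follows; the specific numerical factor in the statement is evidently engineered to make the comparison tight precisely at $k'=2h-1$.

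The only genuine obstacle is guessing which $k'$ to substitute, as everything else is routine algebra. I expect this is exactly why the hypotheses fix odd $h$ and $k=2h$: these parities make $k'=2h-1$ odd (so the floor does not lose anything) while keeping it strictly below the trivial choice $k'=k$. For completeness one could verify that $k'=2h-1$ is indeed the maximizer — the even values $k'=2m$ give $\frac{2k'}{k'+2}$, largest at $k'=2h$ with value $\frac{2h}{h+1}<\frac{2h-1}{h}$ for $h>1$ — but this extra check is not needed for the lower bound itself.
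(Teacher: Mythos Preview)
Your proof is correct and follows essentially the same approach as the paper: both choose $k'=k-1=2h-1$, compute the resulting term to be $\frac{2h-1}{h}$, and then verify algebraically that this equals $(1+\frac{1}{k}-\frac{2}{k^2})\frac{k}{k-h+1}$. One small remark on your closing commentary: $k'=2h-1$ is odd for \emph{any} integer $h$, so the hypothesis that $h$ be odd is not actually what makes the floor harmless here---your argument in fact goes through for even $h$ as well, and the oddness assumption in the statement is stronger than the proof requires.
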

\begin{proof}
For any integer $i\geq 0$, choosing $h=2i+1$, $k=2h$, and $k'=k-1$, we obtain immediately:
\begin{equation}
\label{eqn:onlinelbstrict}
\begin{split}
& \rho_{EL}(h,k) \geq \frac{4i+1}
{4i+1 - \lfloor (2i+1) \frac{4i+1}{4i+2} - \frac{2i+1}{4i+2}\rfloor}\\
& =\frac{4i+1}{4i+1-\lfloor\frac{4i+1}{2} - \frac{1}{2}\rfloor}
= \frac{4i+1}{2i+1} 
= \frac{k-1}{\frac{k}{2}}\\
& = \frac{k-1}{\frac{k}{2}} \cdot \frac{\frac{k}{2}+1}{k} \cdot \frac{k}{k-h+1}\\
& = (1+\frac{1}{k}-\frac{2}{k^2}) \frac{k}{k-h+1}
\end{split}
\end{equation}
\end{proof}

\begin{theorem}
\label{thm:roelub}
$\frac{k}{k-h+1} \leq \rho_{EL}(h,k)< (1+\frac{1}{k})\frac{k}{k-h+1}$ for all $h,k\in\mathbb{Z}^+$ with $h\leq k$, and $\rho_{EL}(h,k) = \frac{k}{k-h+1}$ if $k\geq h > k-\sqrt{k}$.
\end{theorem}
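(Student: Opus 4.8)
The plan is to make the floor transparent by extracting the remainder. Using $h\frac{k'}{k}-\frac{h}{k}=\frac{h(k'-1)}{k}$ and writing $s_{k'}=h(k'-1)\bmod k\in\{0,\dots,k-1\}$, we have $\lfloor\frac{h(k'-1)}{k}\rfloor=\frac{h(k'-1)-s_{k'}}{k}$, so the generic term of the maximum defining $\rho_{EL}$ becomes
\[
\phi(k'):=\frac{k'}{k'-\lfloor\frac{h(k'-1)}{k}\rfloor}=\frac{k'k}{k'(k-h)+h+s_{k'}} .
\]
Everything then reduces to controlling $s_{k'}$, and the three assertions of the theorem all follow by cross-multiplication against this single closed form.

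First I would settle the two easy bounds. For the lower bound I evaluate $\phi$ at $k'=k$: since $\frac{h(k-1)}{k}=h-\frac{h}{k}$ lies in $(h-1,h]$, its floor is $h-1$ (this holds both for $h<k$ and, by direct check, for $h=k$), giving $\phi(k)=\frac{k}{k-h+1}$ and hence $\rho_{EL}(h,k)\ge\frac{k}{k-h+1}$. For the upper bound I use only $s_{k'}\ge 0$, which yields $\phi(k')\le\frac{k'k}{k'(k-h)+h}$; it then suffices to show $\frac{k'k}{k'(k-h)+h}<\frac{k+1}{k-h+1}$ for every $k'\le k$. Cross-multiplying and cancelling the common term $k'k(k-h)$, this collapses to $k'h<(k+1)h$, i.e. $k'<k+1$, which holds for all $k'\le k$. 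Thus $\rho_{EL}(h,k)<(1+\frac1k)\frac{k}{k-h+1}$.

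The real work is the equality case $k\ge h>k-\sqrt{k}$, where I must show $\phi(k')\le\phi(k)=\frac{k}{k-h+1}$ for \emph{every} $k'$. The same cross-multiplication shows this is \emph{exactly} equivalent to the condition $s_{k'}\ge k'-h$. For $k'\le h$ the right-hand side is nonpositive and the condition is immediate, so the only interesting range is $k'\in\{h+1,\dots,k\}$. Here I set $t=k-h$, so the hypothesis $h>k-\sqrt{k}$ reads $t<\sqrt{k}$, i.e. $t^2<k$, and I write $k'=h+j$ with $1\le j\le t$. Working modulo $k$ with $h\equiv -t$ gives
\[
h(k'-1)\equiv(-t)(j-1-t)=t(t+1-j)\pmod{k}.
\]
Because $1\le t+1-j\le t$ and $t<\sqrt{k}$, the quantity $t(t+1-j)$ satisfies $0<t(t+1-j)\le t^2<k$, so it \emph{is} the genuine remainder $s_{k'}$. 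Finally $t(t+1-j)\ge j$ factors as $t(t+1)\ge j(t+1)$, i.e. $t\ge j$, which holds since $j\le t$; hence $s_{k'}\ge j=k'-h$, completing the equality case.

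The hard part will be exactly this modular computation: the subtle point is that the threshold $t<\sqrt{k}$ is precisely what forces $t(t+1-j)$ to remain below $k$ and therefore to coincide with the true remainder $s_{k'}$, after which the clean factorization $t(t+1-j)\ge j\iff t\ge j$ finishes the estimate. That the threshold is sharp is visible in the same computation: for a perfect square $k$ with $t=\sqrt{k}$ and $j=1$, the residue $t(t+1-j)=t^2=k$ collapses to $0<j$, so $s_{k'}\ge k'-h$ fails and $\rho_{EL}(h,k)$ genuinely exceeds $\frac{k}{k-h+1}$ (consistent with Theorem~\ref{thm:onlinelbstrict}).
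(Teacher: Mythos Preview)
Your proof is correct and follows the same three-part outline as the paper: evaluate at $k'=k$ for the lower bound, drop the floor for the upper bound, and for the equality case dispatch $k'\le h$ trivially and handle $k'>h$ by an explicit computation. The decisive arithmetic fact in the equality case---that the relevant quantity $t(t+1-j)$ stays strictly below $k$ precisely because $t<\sqrt{k}$---is exactly the paper's observation that $a(b+1)<k$, under the change of variables $a=t$, $b=t-j$ (the paper writes $h=k-a$, $k'=k-b$, so your $j=k'-h$ equals $a-b$).

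Where you differ is in packaging. You extract the remainder $s_{k'}=h(k'-1)\bmod k$ once and for all, obtaining the single closed form $\phi(k')=\frac{k'k}{k'(k-h)+h+s_{k'}}$, which reduces the equality case to the transparent equivalence $\phi(k')\le\phi(k)\iff s_{k'}\ge k'-h$. The paper instead substitutes $h=k-a$, $k'=k-b$, expands $\frac{(k-a)(k-b-1)}{k}$, and shows the floor of the leftover fractional part $\frac{a(b+1)}{k}$ vanishes. Your modular formulation is a bit tidier and makes the role of the threshold $t<\sqrt{k}$ (namely, that it forces $t(t+1-j)$ to be the genuine residue) more visible; your cross-multiplication for the upper bound is also slightly more direct than the paper's two-step route through $\frac{k}{k-h+h/k}$. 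But the underlying mathematics is identical.
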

\begin{proof}
It is immediate to verify that, for $k'=k$:
\begin{equation}
\label{eqn:roelvscompratio}
\rho_{EL}(h,k)\geq \frac{k'}{k'-\lfloor h \frac{k'}{k} - \frac{h}{k}\rfloor} = \frac{k}{k-h+1}
\end{equation}

And since, if $k'\leq h$, we have that: 
\begin{equation}
\label{eqn:roellimitedrange}
\begin{split}
\frac{k'}{k'-\lfloor h \frac{k'}{k} - \frac{h}{k}\rfloor}
\leq \frac{k'}{k'- (h \frac{k'}{k} - \frac{h}{k})} \\
= \frac{k'\frac{k}{k'}}
{k'\frac{k}{k'} -  h \frac{k'}{k}\frac{k}{k'} +  \frac{h}{k}\frac{k}{k'}}
\leq \frac{k}{k-h+1}
\end{split}
\end{equation} 

then values of $k'\leq h$ can be disregarded in the $\max$ operation. 
To prove that, for all $h \leq k$, 
$\rho_{EL}(h,k)\leq (1+\frac{1}{k})\frac{k}{k-h+1}$,
note that:

\begin{equation}
\label{eqn:roelsmooth}
\begin{split}
& \rho_{EL}(h,k)
= max_{k'\in\{1,\dots,k\}} 
\frac{k'}{k'-\lfloor h \frac{k'}{k} - \frac{h}{k}\rfloor}\\
& \leq max_{k'\in\{1,\dots,k\}} 
\frac{k'}{k'- (h \frac{k'}{k} - \frac{h}{k})}\\
& = \frac{k}{k-(h \frac{k}{k} - \frac{h}{k})}
= \frac{k}{k-h +\frac{h}{k}}
\end{split}
\end{equation}

Then we obtain:

\begin{equation}
\label{eqn:roelsmall}
\frac{\rho_{EL}(h,k)}{\frac{k}{k-h+1}}
\leq \frac{k-h+1}{k-h+\frac{h}{k}}
= 1+\frac{\frac{k-h}{k}}{k-h+\frac{h}{k}}
= 1+\frac{1}{k+\frac{h}{k-h}}
< 1+\frac{1}{k}
\end{equation}

To prove that $\rho_{EL}(h,k)$ coincides with $\frac{k}{k-h+1}$ for $h>k-\sqrt{k}$
let us rewrite $h$ and $k'$ as $h=k-a$ and $k'=k-b$, 
with  $a>b$ and $a,b\in\mathbb{Z}_0^+$. We obtain:

\begin{equation} 
\label{eqn:roelinteger}
\begin{split}
& \rho_{EL}(h>(k-\sqrt{k}),k)
= \max_{\sqrt{k}>a>b} 
\frac{k-b}{k-b-\lfloor  \frac{(k-a)(k-b)}{k} - \frac{k-a}{k}\rfloor}\\
& = \max_{\sqrt{k}>a>b}
\frac{k-b}{k-b-\lfloor k-(a+b)+\frac{ab}{k} - 1 + \frac{a}{k}\rfloor}\\
& = \max_{\sqrt{k}>a>b}
\frac{k-b}{k-b-k+(a+b)+1-\lfloor \frac{a(b+1)}{k}\rfloor}\\
& < \max_{\sqrt{k}>a>b}
\frac{k}{k-h+1-\lfloor\frac{a(b+1)}{k}\rfloor}
 = \frac{k}{k-h+1}
\end{split}
\end{equation} 

where the last equality follows from the fact that, since $a=k-h<\sqrt{k}$ and $b\leq a-1<\sqrt{k}-1$, then $a(b+1)<k$. 
\end{proof}

The complex expression of $\rho_{EL}(h,k)$ is in part due to the ``rounding'' of the memory capacity of the optimal offline algorithm. However, it is important to note that this rounding is not sufficient to explain why $\rho_{EL}(h,k)$ can be strictly \emph{larger} than the ``classic'' ratio $\frac{k}{k-h+1}$ obtained when capacity is fixed at its maximum value: at smaller capacities rounding can only favour the online algorithm, and for any fixed ratio $\frac{k'}{h'}$, $\frac{k'}{k'-h'+1}$ strictly decreases with $k'$, again favouring the online algorithm at smaller capacities. \emph{Capacity fluctuations (rather than simply the choice between different, constant capacities) are then the source of the separation} between $\rho_{EL}(h,k)$ and the ``classic'' $(h,k)$-competitive ratio $\frac{k}{k-h+1}$.

\vspace*{3mm}
\section{Decoupling replacement from capacity in RAM rental}
\label{sec:applications}
\vspace*{1mm}

The results from section \ref{sec:good} can be readily applied to the RAM rental problem, in which a paging algorithm ALG can choose the capacity sequence (with maximum capacity $k$), and the cost it incurs and must minimize on a request sequence $\sigma$ is:
\begin{equation}
\label{eqn:ramrental}
R_{ALG}^k(\sigma) = \sum_{i=1}^{|\sigma|} (\alpha f(i) + \beta w(i))
\end{equation}
where $w(i)$ is the capacity when serving the $i^{th}$ request of $\sigma$, and $f(i)$ is $1$ if that request is a fault and $0$ otherwise. 
The fundamental consequence of our results from section~\ref{sec:good} is that to a large extent the replacement policy can be decoupled from the choice of capacities. More precisely, theorem~\ref{thm:marking} yields:

\begin{corollary}
\label{cor:decoupling}
Consider a paging algorithm $ALG$, servicing each request $\sigma_i$ of a sequence $\sigma$ with capacity $w(i)\leq h$ and an arbitrary (even offline) replacement policy; and a second paging algorithm $ALG'$ servicing $\sigma_i$ with capacity $2w(i)$ and a replacement policy that can be any marking or dynamically conservative algorithm. Then, for any choice of $\alpha$, $\beta$ and $w(\cdot)\leq h$:
\begin{equation}
\label{eqn:decoupling}
R_{ALG'}^{2h}(\sigma) \leq 2\cdot R_{ALG}^h(\sigma).
\end{equation}
\end{corollary}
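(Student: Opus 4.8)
The plan is to split the RAM rental cost into its fault component and its capacity component and treat them separately. Writing $c_{ALG}$ and $c_{ALG'}$ for the total numbers of faults, Equation~\ref{eqn:ramrental} gives $R_{ALG}^h(\sigma)=\alpha\,c_{ALG}+\beta\sum_i w(i)$ and $R_{ALG'}^{2h}(\sigma)=\alpha\,c_{ALG'}+\beta\sum_i 2w(i)$. The capacity component of $ALG'$ equals $2\beta\sum_i w(i)$, exactly twice that of $ALG$ by construction, so the capacity term already satisfies the desired factor-$2$ bound with equality. The whole statement therefore reduces to the single inequality $c_{ALG'}\leq 2\,c_{ALG}$ on the fault counts; once that is in hand, multiplying by $\alpha$ and adding $2\beta\sum_i w(i)$ to both sides yields Equation~\ref{eqn:decoupling} at once.

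First I would set up the application of theorem~\ref{thm:marking}, viewing $ALG'$ as the online algorithm: its capacity sequence is $\mu'=2w(\cdot)$, whose entries never exceed $2h$, and its policy is marking or dynamically conservative. I instantiate the dynamic competitive ratio with second argument $k=2h$ and first argument $h$, so the reduced comparison capacity sequence is $\lfloor\frac{h}{2h}\cdot 2w(\cdot)\rfloor$. Because every $w(i)$ is an integer the floor collapses, $\lfloor\frac{h}{2h}\,2w(i)\rfloor=w(i)$, so the offline comparison runs at capacity exactly $w(\cdot)$ -- precisely the capacity of $ALG$. Theorem~\ref{thm:marking} then gives $c_{ALG'}\leq\rho_{EL}(h,2h)\cdot c_{ALG}$. (If one prefers to use the statement only against OPT, invoke theorem~\ref{thm:elasticlfd}: OPT is optimal at capacity $w(\cdot)$, so $c_{OPT}(\pi,w(\cdot))\leq c_{ALG}(\pi,w(\cdot))$, and the same chain follows.)

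Next I would bound $\rho_{EL}(h,2h)$. By theorem~\ref{thm:roelub} with $k=2h$,
$$\rho_{EL}(h,2h)<\Bigl(1+\tfrac{1}{2h}\Bigr)\frac{2h}{2h-h+1}=\frac{2h+1}{h+1}<2,$$
the last step holding because $2h+1<2h+2$. Hence $c_{ALG'}\leq\rho_{EL}(h,2h)\,c_{ALG}<2\,c_{ALG}$ whenever $c_{ALG}>0$, while if $c_{ALG}=0$ the same bound forces $c_{ALG'}=0$; in either case $c_{ALG'}\leq 2\,c_{ALG}$, as required.

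The computations are routine, so the real point of care -- the main obstacle -- is the bookkeeping around the additive slack in the notion of competitive ratio. Definition~\ref{def:elastic} permits an additive constant $d$, yet the corollary asks for an exact factor $2$ with no slack, and since $\rho_{EL}(h,2h)$ is only \emph{strictly} below $2$ (tending to $2$ as $h\to\infty$) a stray additive term could in principle spoil the clean bound. I would therefore lean on the fact that the proof of theorem~\ref{thm:marking} actually produces a pure ratio of sums (Equation~\ref{eqn:elasticratio}) with no additive constant, and that this ratio bounds the online cost against \emph{any} offline algorithm restricted to the reduced capacity, not merely against OPT. That is exactly what lets me substitute $ALG$ itself for the comparison algorithm and obtain $c_{ALG'}\leq\rho_{EL}(h,2h)\,c_{ALG}$ cleanly, closing the argument.
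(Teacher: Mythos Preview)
Your proposal is correct and follows essentially the same route as the paper: the paper's justification is a single line stating that the corollary follows from theorem~\ref{thm:marking} because $ALG'$'s total faults are at most twice $ALG$'s whenever $ALG'$ maintains twice the capacity. Your elaboration---splitting the rental cost into fault and capacity terms, invoking theorem~\ref{thm:roelub} to bound $\rho_{EL}(h,2h)<2$, and handling the additive constant by appealing to the additive-free ratio in the proof of theorem~\ref{thm:marking}---fills in exactly the details the paper leaves implicit.
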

which follows immediately from the fact that the sum of all faults incurred by $ALG'$ is at most twice that by $ALG$ as long as $ALG'$ maintains twice the capacity of $ALG$.  \emph{In other words, RAM rental is all about choosing the correct capacity at any given time; and any of the ``classic'' replacement policies analysed in the previous section will be close to optimal for any choice of $\alpha$, of $\beta$, and of the capacity sequence.}

\vspace*{3mm}
\section{Conclusions}
\label{sec:end}
\vspace*{1mm}

Good performance in the case of constant memory capacity provides no performance guarantees whatsoever in the case of fluctuating memory capacity: moving from a scenario where capacity remains constant to one where it can fluctuate by a single page can mean the difference between performance optimal within a factor $2$, and performance suboptimal by an arbitrarily large factor. This suggests the need of extreme caution when evaluating with classic methodologies the performance of paging algorithms meant for memory systems with dynamic capacity.

A counterpoint to this very ``negative'' result is that several extremely simple classic paging algorithms achieve optimal or nearly optimal performance even in the dynamic capacity framework. This is particularly surprising because none of these algorithms is designed to take memory capacity fluctuations into explicit account: counterintuitively, while knowledge of future page requests provides an advantage, knowledge of future memory capacity does not. A practical corollary is that, in the design of memory architectures, one can then efficiently decouple the problem of allocating memory resources to different cores/processes/threads from the problem of managing the allocated memory -- greatly simplifying system design and analysis and providing a strong (a posteriori!) theoretical justification for the exokernel approach~\cite{exokernel}.

As in classic paging, in the dynamic capacity framework competitive analysis fails to distinguish between the performance of LRU, of FIFO, and of more naive algorithms such as RAND or FWF -- at least without resorting to more sophisticated approaches such as access graphs. While each of these algorithms is still guaranteed to outperform an optimal offline algorithm (and thus any other online algorithm) whose memory system has half the capacity and twice the access cost, there are probably differences within those factors of $2$ that would be important to characterize in practice. It is by no means clear whether the winner in the dynamic capacity scenario would be the same as in the classic one, or whether models designed a posteriori to explain the superiority of e.g. LRU over FIFO would still provide correct predictions. 

In this sense we are not aware of any experimental benchmarks specifically designed to assess the impact of memory capacity fluctuations. A fundamental obstacle in their development seems to be the difficulty of characterizing ``typical'' fluctuation patterns encountered in practice. An interesting line of inquiry would be to investigate whether one can obtain, from the performance numbers of a black box algorithm under a small ``basis'' of specific fluctuation patterns, sufficient information to compute a good assessment of the algorithm's performance numbers under any other pattern.

Finally, the ``dynamic resources'' approach is not necessarily restricted to paging. There are a number of other problems where the amount of resources available for a task can realistically vary over time. Examples include call admission \cite{calladmission1} (with variable circuit capacity) and the numerous variants of online scheduling \cite{schedulingsurvey1} (with e.g. variable number or speed of servers). In addition to studying each problem individually, it would be extremely interesting to identify broad classes sharing similar characteristics. For example, which problems can be solved optimally or almost optimally without knowledge of the amount of resources available in the future (as in the case of paging with dynamic memory capacity)?

\end{document}